\documentclass[reprint,twocolumn,aps,amsmath,amssymb,pra,superscriptaddress,floatfix,tightenlines,nofootinbib,nobibnotes]{revtex4-2}

\usepackage{graphicx}
\usepackage[colorlinks,linkcolor=blue,urlcolor=blue, citecolor=blue]{hyperref}
\usepackage{xurl}
\usepackage[whole]{bxcjkjatype}
\usepackage{booktabs}
\usepackage{graphicx}
\usepackage{comment}

\usepackage{dsfont}
\usepackage{physics}
\usepackage{mathtools}
\usepackage{amsmath, amssymb, amsfonts, amsthm}
\usepackage{thmtools, thm-restate}
\usepackage[sanserif,full]{complexity}

\usepackage{algorithmic}
\usepackage{algorithm}

\newtheorem{theorem}{Theorem}

\newtheorem{lemma}[theorem]{Lemma}
\newtheorem{definition}[theorem]{Definition}

\DeclareMathOperator{\size}{size}
\DeclareMathOperator{\error}{error}

\DeclareMathOperator{\pr}{\mathbf{Pr}}
\DeclareMathOperator{\CC}{\textsf{CC}}
\DeclareMathOperator{\CQ}{\textsf{CQ}}
\DeclareMathOperator{\QC}{\textsf{QC}}
\DeclareMathOperator{\QQ}{\textsf{QQ}}

\DeclareMathOperator{\HeurC}{\textsf{HeurFBPP/poly}}
\DeclareMathOperator{\HeurQ}{\textsf{HeurFBQP}}
\DeclareMathOperator{\heurC}{\textsf{HeurFBPP}}

\allowdisplaybreaks

\begin{document}

\title{Advantage of Quantum Machine Learning from General Computational Advantages}

\author{Hayata Yamasaki*}
\email{hayata.yamasaki@gmail.com} 
\affiliation{Department of Physics, Graduate School of Science, The Univerisity of Tokyo, 7-3-1 Hongo, Bunkyo-ku, Tokyo, 113-0033, Japan}
\author{Natsuto Isogai*}
\affiliation{Department of Physics, Graduate School of Science, The Univerisity of Tokyo, 7-3-1 Hongo, Bunkyo-ku, Tokyo, 113-0033, Japan}
\author{Mio Murao}
\affiliation{Department of Physics, Graduate School of Science, The Univerisity of Tokyo, 7-3-1 Hongo, Bunkyo-ku, Tokyo, 113-0033, Japan}
\collaboration{Hayata Yamasaki and Natsuto Isogai contributed equally to this work.}

\begin{abstract}
An overarching milestone of quantum machine learning (QML) is to demonstrate the advantage of QML over all possible classical learning methods in accelerating a common type of learning task as represented by supervised learning with classical data.
However, the provable advantages of QML in supervised learning have been known so far only for the learning tasks designed for using the advantage of specific quantum algorithms, i.e., Shor's algorithms.
Here we explicitly construct an unprecedentedly broader family of supervised learning tasks with classical data to offer the provable advantage of QML based on general quantum computational advantages, progressing beyond Shor's algorithms.
Our learning task is feasibly achievable by executing a general class of functions that can be computed efficiently in polynomial time for a large fraction of inputs by arbitrary quantum algorithms but not by any classical algorithm.
We prove the hardness of achieving this learning task for any possible polynomial-time classical learning method. 
We also clarify protocols for preparing the classical data to demonstrate this learning task in experiments.
These results open routes to exploit a variety of quantum advantages in computing functions for the experimental demonstration of the advantage of QML\@.
\end{abstract}

\maketitle

\textit{Introduction.}---
Machine learning technologies supervised by big data serve as one of the core infrastructures to support our daily lives.
Quantum machine learning (QML) attracts growing attention as an emerging field of research to further accelerate and scale up the learning by taking advantage of quantum computation~\cite{wittek2014,schuld2021machine}.
Quantum computation is believed to achieve significant speedup in solving various computational problems over conventional classical computation~\cite{N4,arora2009computational}.
The central goal of supervised learning is, however, not solving the computational problems themselves but finding and making a correct prediction on unseen data under the supervision of given sample data~\cite{kearns1990computational,kearns1994introduction, scholkopf2002learning,bach2021learning}.
A far-reaching milestone in the field of QML is to demonstrate the advantage of QML, i.e., an end-to-end acceleration in accomplishing this goal of learning in such a way that any possible classical learning method would never be able to achieve.

However, it has been challenging to realize this milestone due to our limited theoretical understanding of the learning tasks with the advantage of QML\@.
Representative QML algorithms such as those in Refs.~\cite{rebentrost2014quantum,kerenidis_et_al:LIPIcs.ITCS.2017.49,PhysRevA.99.052331,NEURIPS2020_9ddb9dd5,https://doi.org/10.48550/arxiv.2106.09028,10.5555/3618408.3620034} have theoretically guaranteed upper bounds of their runtime, and it is indeed hard for existing classical algorithms to achieve the same learning tasks as these QML algorithms within a comparable runtime; nevertheless, these facts are insufficient to provably rule out the possibility of the potential existence of classical learning methods achieving the comparable runtime.
For example, the quantum algorithm for recommendation systems was initially claimed to achieve an exponential speedup compared to the existing classical algorithms at the time~\cite{kerenidis_et_al:LIPIcs.ITCS.2017.49}, but it turned out in later research that the quantum algorithm achieves only a polynomial speedup compared to the best possible classical algorithm due to a breakthrough in designing a quantum-inspired classical algorithm for solving the same task~\cite{10.1145/3313276.3316310}.
So far, the advantage of QML in accelerating supervised learning with classical data has been proven only based on the quantum computational advantage of Shor's algorithms~\cite{shor1994algorithms,doi:10.1137/S0097539795293172,shor1999polynomial} to solve integer factoring and discrete logarithms~\cite{servedio2004equivalences, liu2021rigorous}.
The existing techniques to prove the hardness of learning for all possible classical methods use a cryptographic argument essentially depending on the specific mathematical structure of discrete logarithms and integer factoring~\cite{kearns1994Jan,kearns1994introduction,servedio2004equivalences,liu2021rigorous,kearns1990computational}, which do not straightforwardly generalize.
More recently, necessary conditions for constructing learning tasks with the advantage of QML have also been investigated in Refs.~\cite{gyurik2023establishing, gyurik2023exponential}, but these works do not explicitly provide the learning tasks to meet these requirements in general.
A fundamental open question in the theory of QML has been what types of quantum computational advantages, beyond that of Shor's algorithms, lead to learning tasks to demonstrate the end-to-end acceleration of the learning; to address this question, novel techniques need to be established to prove the classical hardness of learning beyond the realm of Shor's algorithms.

Also from a practical perspective, toward the experimental demonstration of the advantages of QML, Shor's algorithms are challenging to realize with near-term quantum technologies~\cite{Gidney2021howtofactorbit}, confronting as an obstacle to the demonstration.
One reason for this practical challenge is rooted in the fact that Shor's algorithms need to compete with the well-established classical algorithms for integer factoring that run only within subexponential time, which is much shorter than exponential time~\cite{Lenstra1992ART,10.1007/BFb0091539,9740707}.
For this reason, a milestone in realizing Shor's algorithms is often set to factorize a relatively large integer, such as a 2048-bit integer~\cite{Gidney2021howtofactorbit}.
On the other hand, apart from Shor's algorithms, polynomial-time quantum algorithms can also solve other types of computational problems that are potentially harder for classical algorithms, such as those relevant to topological data analysis (TDA)~\cite{lloyd2016quantum, hayakawa2022quantum, 9996768, akhalwaya2022towards, mcardle2022streamlined}, Pell's equation~\cite{10.1145/509907.510001, hallgren2007polynomial}, and \textsf{BQP}-complete problems~\cite{freedman2002modular,wocjan2006several, aharonov2006polynomial,PhysRevLett.103.150502,Aharonov_2011,10.1145/3519935.3519991,gharibian2022improved}.
For the classically hard problems, one could potentially use a much smaller size of the problem instance, e.g., with much less than $2048$-bit inputs, to demonstrate the quantum computational advantages.
In view of this, the solution to the above open question on the relation between quantum computational advantages and the advantage of QML will also constitute a significant step to the practical demonstration as well as the fundamental understanding of QML\@.

In this work, we address this open question by showing that quantum advantages in computing functions \textit{in general} lead to the provable end-to-end advantage of QML in conducting supervised learning with classical data, without specifically depending on Shor's algorithms.
In particular, for a general class of functions that can be computed efficiently in polynomial time for a large fraction of inputs by quantum algorithms but not by any classical algorithm (even under the supervision of data), we explicitly construct a family of classification tasks in a conventional setting of supervised learning with classical sample data, i.e., in a probably approximately correct (PAC) learning model~\cite{valiant1984theory,kearns1990computational,kearns1994introduction}.
We construct a polynomial-time quantum algorithm for solving our learning task using a polynomial amount of classical sample data.
This quantum algorithm is simply implementable by a variant of the conventional learning method: feature mapping by quantum computation to map the input classical data into the corresponding bit strings representing their features, followed by linear separation by classical computation to find an appropriate hyperplane in the feature space to achieve the classification.
At the same time, we prove that no polynomial-time classical algorithm can accomplish this learning task.
In contrast with the previous work on the advantage of QML using Shor's algorithms~\cite{servedio2004equivalences, liu2021rigorous}, our results lead to unprecedented candidates of quantum algorithms for the explicit demonstration of the advantage of QML in supervised learning.
Furthermore, we provide a protocol for preparing the classical sample data to demonstrate this advantage of QML in the experiments.
These results open vast opportunities for anyone to use a general class of quantum algorithms of their favorite to achieve QML with a provable advantage over any classical learning method, progressing beyond the previous approach specifically depending on Shor's algorithms.

\textit{Formulation of learning tasks.}---
We describe the setting of learning and the formulation of our learning task.
Our analysis is based on a conventional setting of supervised learning, i.e., the PAC learning model~\cite{valiant1984theory,kearns1990computational, kearns1994introduction}.
See Methods on the definition of the PAC learning model.

Following the convention of the PAC learning, we formulate our concept class $\mathcal{C}_N$, i.e., a set of functions $c\in\mathcal{C}_N$ to be learned, which classify an $N$-bit input $x$ coming from a target probability distribution $\mathcal{D}_N$ into binary-labeled categories specified by $c(x)=0$ or $c(x)=1$.
Our formulation is in line with a conventional learning approach based on feature mapping and linear separation (Fig.~\ref{fig:feature and linear}).
In this approach, the learning algorithm first maps an input $x$ to another vector $f(x)$ and subsequently classifies $x$ in the space of $f(x)$.
This mapping $f$ is known as a feature map, transforming $x$ in the input space into the corresponding feature $f(x)$ in the feature space that encapsulates essential information for the classification.
The sets of $x$ satisfying $c(x)=0$ and $c(x)=1$ are mapped into $\mathcal{F}_0$ and $\mathcal{F}_1$ of $f(x)$, respectively.
The feature map here should be designed so that $\mathcal{F}_0$ and $\mathcal{F}_1$ have linear separability, i.e, the property that a hyperplane in the feature space should be able to distinguish between $\mathcal{F}_0$ and $\mathcal{F}_1$~\cite{cover1965geometrical}.
More formally, there should exist a vector $s$ in the feature space and a threshold $t$ such that
\begin{equation}
\label{eq:linear separation in introduction}
\begin{aligned}
    f(x)\cdot s \leq t \text{ for } c(x)=0;\quad
    f(x)\cdot s > t \text{ for } c(x)=1.
\end{aligned}
\end{equation}
The equation $f(x)\cdot s = t$ represents the hyperplane to separate $\mathcal{F}_0$ and $\mathcal{F}_1$ specified by the unknown target concept $c$ to be learned.
The concept class of $c$ is learnable by converting the given input samples using the feature map, followed by finding this hyperplane, i.e., its parameter $s$, using the corresponding output samples.
Once we find $s$, for a new input $x$ drawn from $\mathcal{D}_N$, we can make a correct prediction of $c(x)$ by evaluating a hypothesis $h(x)$ in a hypothesis class, which classifies $x$ based on the value of $f(x)\cdot s$.

\begin{figure}[t]
    \centering
    \includegraphics[width=3.4in]{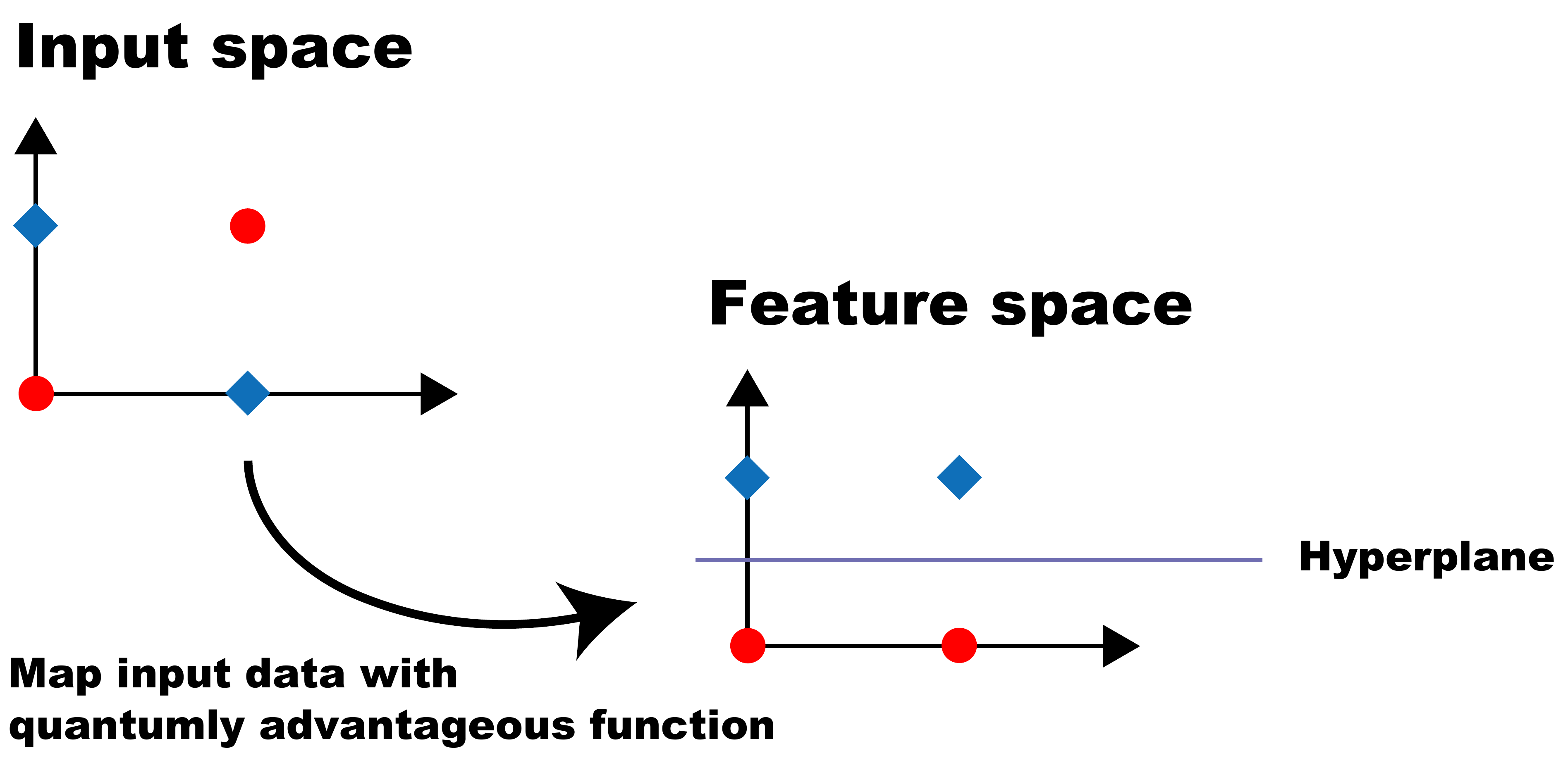}
    \caption{A conventional learning approach based on feature mapping and linear separation, which our work also follows. Inputs $x$ in the input space (red circles with output labels $c(x)=0$ and blue squares with $c(x)=1$) are mapped into the corresponding features $f(x)$ in the feature space by a feature map $f$. Then, using the features $f(x)$ of the input samples and the corresponding output samples $c(x)$, we find a hyperplane linearly separating the sets of features for $c(x)=0$ and $c(x)=1$ as in~\eqref{eq:linear separation in introduction} to achieve the learning. In our learning tasks, we use quantumly advantageous functions as $f$.}
    \label{fig:feature and linear}
\end{figure}

Based on this approach, we construct our concept class $\mathcal{C}_N=\{c_s\}_{s}$ parameterized by $s$ in the vector space $\mathbb{F}_2^D$ over a finite field, where $\mathbb{F}_2=\{0,1\}$ is the finite field representing a bit, and $D$ is the dimension of the feature space  $\mathbb{F}_2^D$.
Each concept $c_s$ is a function from the input space $\{0,1\}^N$ to binary labels $\{0,1\}$.
With some choice of the feature map $f_N:\{0,1\}^N\to\mathbb{F}_2^D$, we here define $c_s$ as
\begin{equation}
\label{eq:concept class in introduction}
    c_s(x) \coloneqq f_N(x) \cdot s \in \mathbb{F}_2 = \{0,1\},
\end{equation}
where $f_N(x) \cdot s$ is the bitwise inner product in  $\mathbb{F}_2^D$.
This concept class is designed in accordance with the convention in machine learning based on feature mapping and linear separation as in~\eqref{eq:linear separation in introduction}, yet using the finite fields as the feature space (i.e., $f_N(x) \cdot s=t\coloneqq 0$ or $f_N(x) \cdot s=1$).

\textit{Advantage of QML from general quantum computational advantages.}---
To seek the advantage of QML, we study our concept class in~\eqref{eq:concept class in introduction} with an appropriate choice of the feature map $f_N$.
Remarkably, for our concept class, we show that $f_N$ can be arbitrarily chosen from, roughly speaking, a general class of functions that can be computed efficiently within a polynomial time by quantum algorithms but not by classical algorithms.
In the following, we introduce this general class of functions for $f_N$, followed by describing how the advantage of QML emerges from this general quantum advantage in computing $f_N$.

Under a target distribution $\mathcal{D}_N$, we choose the feature map $f_N$ to be any function in a general class denoted by
\begin{equation}
\label{eq:quantumly_advantageous}
    \{(f_N,\mathcal{D}_N)\}_{N}\in\HeurQ\setminus(\HeurC),
\end{equation}
as defined more precisely in the following (see also Appendix~\ref{sec:quantum_advantage} for more detail).
We call a function in this class a \textit{quantumly advantageous function} under $\mathcal{D}_N$.

In~\eqref{eq:quantumly_advantageous}, we require that $f_N(x)$ should be efficiently computable by a quantum algorithm for a large fraction of $x$ drawn from $\mathcal{D}_N$, and the set $\HeurQ$ of pairs of the function $f_N$ and the distribution $\mathcal{D}_N$ represents those satisfying this requirement.
In the computational complexity theory, a (worst-case) complexity class $\FBQP$~\cite{10.1007/978-3-642-20712-9_1} conventionally represents a set of functions $f_N(x)$ that is efficiently computable by a quantum algorithm for all $x$ even for the worst-case input~\footnote{Note that Ref.~\cite{10.1007/978-3-642-20712-9_1} defines $\FBQP$ as a class of search problems, i.e., computation of functions having a set of multiple outputs for each input, but we consider $f_N$ to have a single output $f_N(x)$ for each input $x$.}, but for PAC learning, it suffices to work on a \textit{heuristic} complexity class $\HeurQ$ that requires the efficiency not all but only a large fraction of $x$~\cite{TCS-004,Impagliazzo}.
More precisely, $\{(f_N,\mathcal{D}_N)\}_N\in\HeurQ$ requires that there exists a quantum algorithm $\mathcal{A}(x,\mu,\nu)$ that should run, for any $N$ and $0 < \mu,\nu<1$, within a polynomial runtime $O(\poly(N,1/\mu,1/\nu))$ to output $f_N(x)$ correctly for a large fraction $1-\mu$ of inputs $x$ drawn from $\mathcal{D}_N$ with a high probability at least $1-\nu$, i.e.,
\begin{align}
\label{eq:quantum}
\pr_{x\sim\mathcal{D}_N}[\pr[\mathcal{A}(x,\mu,\nu) = f_N(x)]\geq 1-\nu]\geq 1-\mu,
\end{align}
where the inner probability is taken over the randomness of $\mathcal{A}$.

At the same time, in~\eqref{eq:quantumly_advantageous}, we require that $f_N(x)$ should not be efficiently computable by any classical (randomized) algorithm for the large fraction of $x$ even with the help of the sample data, and to meet this requirement, it suffices to consider the relative complement of the set $\HeurC$ as in~\eqref{eq:quantumly_advantageous}.
In the complexity theory, $\heurC$ can be considered to be the classical analog of $\HeurQ$ while $\textsf{FBPP}$ the classical analog of $\FBQP$, indicating that the functions $f_N(x)$ are efficiently computable from the given input $x$ by a classical randomized algorithm.
The previous work on the advantage of QML~\cite{servedio2004equivalences,liu2021rigorous} used a cryptographic argument specifically depending on discrete logarithms and integer factoring to prove the classical hardness of their learning tasks, but we here identify that we can use the heuristic complexity class $\heurC$ to rule out the existence of polynomial-time classical learning algorithms for our learning tasks.
In the setting of PAC learning, the sample data are also initially given in addition to $x$.
The use of the sample data is not necessarily limited to the learning, but the data may also potentially help the classical algorithm to compute $f_N(x)$ itself more efficiently~\cite{huang2021power, gyurik2023establishing, gyurik2023exponential}.
The parameters of hypotheses learned from the data can be seen as a bit string of polynomial length $O(\poly(N,1/\mu,1/\nu))$ that could potentially make the computation more efficient, known as the advice string $\alpha$ in the complexity theory~\cite{arora2009computational}.
To address this issue, we require that $f_N(x)$ should remain hard to compute by any classical algorithm even with an arbitrary polynomial-length advice string $\alpha$.
More precisely, $\{(f_N,\mathcal{D}_N)\}_N\notin\HeurC$ requires that no classical (randomized) algorithm $\mathcal{A}(x,\alpha,\mu,\nu)$ with an advice $\alpha$ of length $O(\poly(N,1/\mu,1/\nu))$ should run, for any $N$ and $0 < \mu,\nu <1$, in a polynomial time $O(\poly(N,1/\mu,1/\nu))$ to output $f_N(x)$ correctly for a large fraction $1-\mu$ of $x$ from $\mathcal{D}_N$ with a high probability at least $1-\nu$, i.e.,
\begin{equation}
    \pr_{x\sim\mathcal{D}_N}[\pr[\mathcal{A}(x,\alpha) = f_N(x)]\geq 1-\nu]\geq 1-\mu,
\end{equation}
where the inner probability is taken over the randomness of $\mathcal{A}$.

Our main result proves that for \textit{any} choice of the quantumly advantageous functions $f_N$ in~\eqref{eq:quantumly_advantageous},  our concept class in~\eqref{eq:concept class in introduction} leads to the advantage of QML\@.
In particular, the main result is summarized as follows.
(See also Methods for the more precise definitions of the efficient learnability of the concept and the efficient evaluatability of the hypothesis.)
\begin{theorem}[Advantage of QML from general computational advantages]
\label{thm:informal}
    Under any target distribution $\mathcal{D}_N$ over $N$-bit inputs, for any quantumly advantageous function $f_N$ under $\mathcal{D}_N$, the concept class $\mathcal{C}_N$ defined in~\eqref{eq:concept class in introduction} with $f_N$ is quantumly efficiently learnable, and for this $\mathcal{C}_N$, we can construct a quantumly efficiently evaluatable hypothesis class. By contrast, $\mathcal{C}_N$ is not classically efficiently learnable by any classically efficiently evaluatable hypothesis class.
\end{theorem}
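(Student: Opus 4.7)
The plan is to establish the two halves of Theorem~\ref{thm:informal} separately. For the quantum upper bound, I combine feature mapping through the quantum algorithm $\mathcal{A}$ guaranteed by $\{(f_N,\mathcal{D}_N)\}\in\HeurQ$ with Gaussian elimination over $\mathbb{F}_2$, and for the classical lower bound I give a non-uniform reduction that extracts $f_N$ from any hypothetical classical PAC learner for $\mathcal{C}_N$.

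For quantum learnability, the learner draws $m=\Theta((D/\epsilon)\log(D/\delta))$ labeled samples $(x_i,c_s(x_i))$ from the oracle and invokes $\mathcal{A}$ on every $x_i$ with heuristic parameters $\mu_0=\nu_0=\delta/(4m)$. A union bound ensures that, with probability at least $1-\delta/2$, every $x_i$ lies in the ``good'' set $S=\{x:\pr[\mathcal{A}(x)=f_N(x)]\geq 1-\nu_0\}$ and $\mathcal{A}(x_i)=f_N(x_i)$ holds for all $i$, so the observed linear system $\mathcal{A}(x_i)\cdot\hat{s}=c_s(x_i)$ coincides with $f_N(x_i)\cdot\hat{s}=f_N(x_i)\cdot s$ and is consistent. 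Gaussian elimination then returns some $\hat{s}\in\mathbb{F}_2^D$ satisfying every equation. Because the effective concept class $\{x\mapsto f_N(x)\cdot s:s\in\mathbb{F}_2^D\}$ contains at most $2^D$ distinct functions and hence has VC dimension at most $D$, the standard consistent-hypothesis PAC bound forces generalization error at most $\epsilon/2$ on $\mathcal{D}_N$ conditioned on $S$. The hypothesis $h_{\hat{s}}(x)\coloneqq\mathcal{A}(x,\mu_0',\nu_0')\cdot\hat{s}$ is quantumly efficiently evaluatable, and by choosing $\mu_0',\nu_0'=\Theta(\epsilon)$ the overall error $\pr_{x\sim\mathcal{D}_N}[h_{\hat s}(x)\neq c_s(x)]$ is bounded by $\mu_0'+\nu_0'+\epsilon/2\leq\epsilon$.

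For classical hardness I argue by contradiction. Suppose $\mathcal{C}_N$ admits a polynomial-time classical PAC learner $L$ whose hypotheses lie in some classically efficiently evaluatable class. Applying $L$ with target error $\mu/D$ and constant confidence to each standard-basis concept $c_{e_i}$ (where $e_i\in\mathbb{F}_2^D$), and noting that $c_{e_i}(x)=f_N(x)_i$, the probabilistic method produces, for every $N$, a deterministic tuple of hypotheses $h_1^\star,\dots,h_D^\star$ of total bit-length polynomial in $N$, $D$, and $1/\mu$, each classically evaluatable in polynomial time, with $h_i^\star$ erring on at most a $\mu/D$ fraction of $x\sim\mathcal{D}_N$. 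Encoding this tuple as a non-uniform advice string $\alpha$ and defining the classical algorithm that returns $(h_1^\star(x),\dots,h_D^\star(x))$ on input $(x,\alpha)$, a union bound over $i$ yields $\pr_{x\sim\mathcal{D}_N}[\text{output}\neq f_N(x)]\leq\mu$. This places $(f_N,\mathcal{D}_N)\in\HeurC$ and contradicts~\eqref{eq:quantumly_advantageous}.

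The hardest part is ruling out \emph{every} classically efficiently evaluatable hypothesis class rather than a particular representation; this is resolved by exploiting the non-uniformity permitted by $\HeurC$, since any such hypothesis is by definition a polynomial-length bit string whose evaluation is a polynomial-time classical computation, which are exactly the ingredients of a $\HeurC$ advice-aided algorithm. A secondary concern is keeping $\mathcal{A}$'s parameters consistent across training and evaluation while preserving overall polynomial runtime; this reduces to the observation that a polynomial of a polynomial is still polynomial, so taking $\mu_0,\nu_0$ inversely polynomial in the sample count $m$ keeps the quantum learner within the polynomial budgets allowed by $\HeurQ$.
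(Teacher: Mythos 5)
Your proposal is correct, and its two halves relate differently to the paper. The classical-hardness half is essentially the paper's own argument: learn the $D$ standard-basis concepts $c_{e_d}(x)=f_N(x)_d$, package their polynomial-length representations as a non-uniform advice string, and evaluate them bit-by-bit to place $(f_N,\mathcal{D}_N)$ in $\HeurC$, contradicting the definition of a quantumly advantageous function (one small bookkeeping point: the error budget should be split, e.g.\ $\mu/(2D)$ per concept for the learning error of $h_{e_d}$ and $\mu/(2D)$ per concept for the failure of the classical \emph{evaluation} algorithm on $h_{e_d}$, as the paper does; your single $\mu/D$ allocation does not separately account for the evaluator erring on a further fraction of inputs, though this is only a constant-factor adjustment). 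The quantum half takes a genuinely different route at its key step: to show that the Gaussian-elimination solution $\hat{s}$ generalizes even when $\hat{s}\neq s$, you invoke the standard finite-class consistent-learner bound ($\lvert\{x\mapsto f_N(x)\cdot s'\}\rvert\leq 2^D$, so $O((D+\log(1/\delta))/\epsilon)$ samples give uniform convergence), whereas the paper proves an exchangeability lemma (Lemma~\ref{lemma:linear combination bound}): among $M+1$ IID draws from any distribution on $\mathbb{F}_2^D$, the last vector lies in the $\mathbb{F}_2$-span of the first $M$ with probability at least $1-D/(M+1)$, and then verifies algebraically that $h_{\hat{s}}(x)=c_s(x)$ for every $x$ whose feature lies in that span. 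Your route buys an off-the-shelf uniform-convergence statement that cleanly separates the randomness of the training set from that of the test point, at the cost of a logarithmic factor in the sample complexity; the paper's route gives the sharper $M=\lceil D/\epsilon-1\rceil$ and a pointwise structural guarantee on which inputs are classified correctly. Both arguments are distribution-free over $\mathcal{D}_N$, and your treatment of the evaluation stage (recomputing the feature with $\mu,\nu=\Theta(\epsilon)$ and taking the inner product with $\hat{s}$) matches the paper's Algorithm for evaluation.
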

Importantly, Theorem~\ref{thm:informal} establishes the advantage of QML for any quantumly advantageous function, in contrast with the fact that the existing techniques for proving the advantage of QML~\cite{servedio2004equivalences, liu2021rigorous} were limited to using the advantage of Shor's algorithms. 
In Methods, to prove Theorem~\ref{thm:informal}, we explicitly construct polynomial-time quantum algorithms for learning the concept and evaluating the hypothesis; at the same time, we prove that no classical algorithm can evaluate hypotheses that correctly predict the concepts in our concept class.

For our concept, the quantum algorithms for the learning and the evaluation are implementable by the simple approach of feature mapping and linear separation: in our case, the feature mapping uses the quantum algorithm in~\eqref{eq:quantum}, and the linear separation is performed only by classical computing.
A technical challenge in constructing our algorithms is that the learning algorithm does not necessarily find the true parameter $s$ of the target concept $c_s$ but may output an estimate $\tilde{s}$ with $\tilde{s}\neq s$; nevertheless, our analysis shows that the parameter $\tilde{s}$ learned by our algorithm leads to a correct hypothesis $h_{\tilde{s}}$ satisfying $h_{\tilde{s}}(x)=c_s(x)$ for a large fraction of $x$ with high probability (see Methods for detail).

The feature mapping and the linear separation may also be applicable to some of the previous works on the advantage of QML~\cite{liu2021rigorous, gyurik2023establishing, gyurik2023exponential}, but a more crucial difference arises from the techniques for proving the classical hardness.
For instance, a feature map constructed in Ref.~\cite{liu2021rigorous} used Shor's algorithms to transform an $N$-bit input into a feature in a feature space, which was taken as a space of functions called the reproducing kernel Hilbert space (RKHS) in the kernel method~\cite{scholkopf2002learning,bach2021learning} to show a polynomial-time quantum learning algorithm.
However, the existing techniques for proving the classical hardness of such learning tasks needed to use a cryptographic argument depending on the specific mathematical structure of discrete logarithms and integer factoring~\cite{servedio2004equivalences,liu2021rigorous,kearns1990computational,kearns1994Jan,kearns1994introduction} and do not straightforwardly generalize. 
By contrast, we develop techniques for analyzing our learning task with its feature space formulated as the vector space over a finite field, making it possible to prove the classical hardness for any quantumly advantageous function in general (see Methods for detail).

\textit{New directions of QML based on quantumly advantageous functions.}---
Our results in Theorem~\ref{thm:informal} open unprecedented opportunities for using a variety of quantum algorithms to demonstrate the advantage of QML, progressing beyond Shor's algorithms.
We here propose several promising candidates of such quantum algorithms relevant to the following different areas.

\begin{enumerate}
    \item \textit{Topological data analysis (TDA).}---
    Quantum algorithms for computing an estimation of normalized Betti numbers and other topological invariants~\cite{lloyd2016quantum, hayakawa2022quantum, 9996768, akhalwaya2022towards, mcardle2022streamlined} gather considerable attention due to their potential applications to TDA, an area of data science using mathematical tools on topology. The functions computed by these quantum algorithms are leading candidates of the quantumly advantageous functions since techniques for proving the computational hardness are also known in multiple relevant cases under conventional assumptions in the complexity theory~\cite{scheiblechner2007complexity, edelsbrunner2014computational, cade2021complexity, gyurik2022towards, crichigno2022clique, schmidhuber2022complexity}.
    Classical sample data given in terms of bit strings can also be used as the input to some of these quantum algorithms, without necessarily using oracles for the input to these algorithms; for example, given $N$ input points constituting a Vietoris-Rips (VR) complex, the quantum algorithm in Ref.~\cite{hayakawa2022quantum} computes an approximation of the normalized persistent Betti number with accuracy $O(1/\poly(N))$ with probability at least $1-O(1/\poly(N))$.
    In this case, the function that this quantum algorithm computes can be used as a feature map $f_N:\{0,1\}^{O(\poly(N))} \to \mathbb{F}_2^{O(\log(N))}$, from which we can construct our concept class according to~\eqref{eq:concept class in introduction}.
    Note that the normalized persistent Betti number may have a different value from the (original) persistent Betti number due to the normalization factor, but independently of such mathematical structure, our results lead to the advantage of QML for our concept class.
    \item \textit{Cryptographic problems beyond the scope of Shor's algorithms.}---
    Shor's algorithms~\cite{shor1994algorithms,doi:10.1137/S0097539795293172,shor1999polynomial} stand as polynomial-time quantum algorithms to solve integer factoring and discrete logarithms relevant to Rivest–Shamir–Adleman (RSA) cryptosystem~\cite{10.1145/359340.359342}, and no existing classical algorithm can solve these problems within a polynomial time.
    But it still remains an unsolved open problem whether these are hard to compute for any possible polynomial-time classical algorithm apart from the existing ones.
    If an efficient classical algorithm for solving these problems were found in the future, the previously known advantage of QML depending on Shor's algorithms would also cease to survive.
    Even in such a case, our results open a chance that the advantage of QML can still survive based on another cryptographic problem that can be harder than those solved by Shor's algorithms.
    For example, given an $N$-bit nonsquare positive integer $d$ for Pell's equation $x^2-dy^2=1$, the first $O(\poly(N))$ digits of $\ln(x_1+y_1\sqrt{d})$ for its least positive solution $(x_1,y_1)$ can be computed with a high probability exponentially close to one by a polynomial-time quantum algorithm~\cite{10.1145/509907.510001, hallgren2007polynomial}; at the same time, even if one has a polynomial-time classical algorithm for solving integer factoring and discrete logarithms, it is unknown if one can obtain a polynomial-time classical algorithm for this computation, which is relevant to a key exchange system based on the principal ideal problem~\cite{10.1007/0-387-34805-0_31} (see Refs.~\cite{10.1145/509907.510001, hallgren2007polynomial} for details).
    This computations yields a feature map $f_N:\{0,1\}^{N} \to \mathbb{F}_2^{O(\poly(N))}$, from which we can construct our concept class according to~\eqref{eq:concept class in introduction}.
    \item \textit{\textsf{BQP}-complete problems.}---
    It is indeed a variant of long-standing open problems in the complexity theory to prove the existence of the quantumly advantageous functions unconditionally without any computational hardness assumption; however, a natural candidate for the quantumly advantageous functions is the functions relevant to the hardest problems in the scope of the polynomial-time quantum algorithms, known as \textsf{BQP}-complete problems~\cite{freedman2002modular,wocjan2006several, aharonov2006polynomial,Aharonov_2011,PhysRevLett.103.150502,10.1145/3519935.3519991,gharibian2022improved}.
    For instance, the function used for the local unitary-matrix average eigenvalue (LUAE) problem in Ref.~\cite{wocjan2006several} yields such a candidate.
    Given an $N$-bit string $b$ and a $O(\poly(N))$-bit string representing an $O(\poly(N))$-size quantum circuit to implement a  $2^N\times2^N$ unitary matrix $U$, 
    let $\{\lambda_j\}_j$ denote the set of eigenvalues of $U$ with the corresponding set $\{\ket{\nu_j}\}_j$ of eigenvectors.
    Then, the LUAE problem involves computation of an estimate of the average eigenvalue $\Bar{\lambda} = \sum_{j=1}^{2^N} |\braket{b}{\nu_j}|^2\lambda_j$ up to precision $O(1/\poly(N))$ with probability at least $1-O(1/\poly(N))$~\cite{wocjan2006several}.
    Thus, the function to be computed in the LUAE problem provides a feature map $f_N:\{0,1\}^{O(\poly(N))} \to \mathbb{F}_2^{O(\log(N))}$, from which we can construct our concept class according to~\eqref{eq:concept class in introduction}.
    We remark that this construction is based on the worst-case complexity class $\BQP$, but for our concept class, we can indeed choose $f_N$ based on the hardest problems in the heuristic complexity class $\HeurQ$, which is potentially even broader than the worst-case complexity class.
\end{enumerate}

\textit{Protocol for preparing classical sample data for the experimental demonstration.}---
%experimental setting
To embody the opportunities for demonstrating the advantage of QML in experiments, we clarify the protocol for preparing the classical sample data for our concept class $\mathcal{C}_N$ in~\eqref{eq:concept class in introduction}.

\begin{figure}[t]
    \includegraphics[width=3.4in]{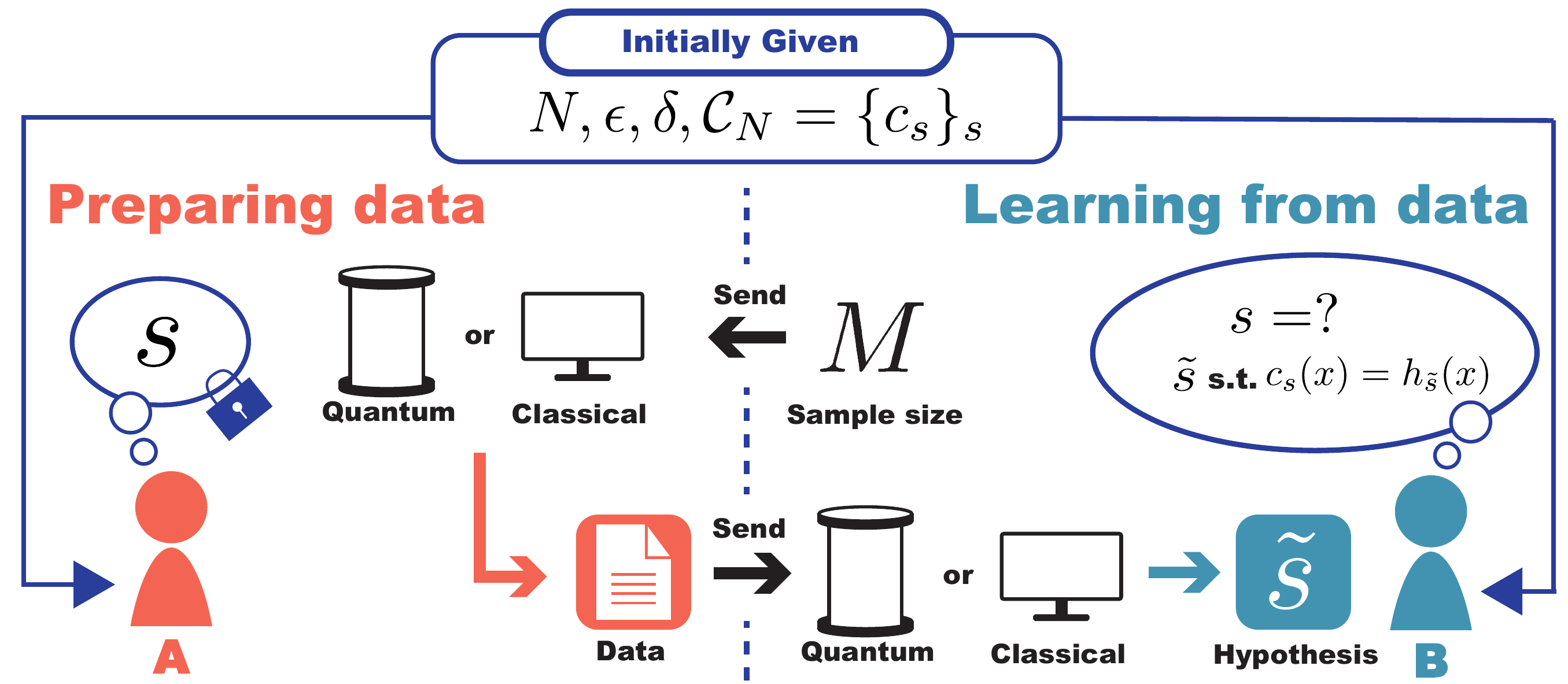}
    \caption{A setup for demonstrating the advantage of QML in supervised learning by two parties $A$ and $B$, where $A$ is in charge of preparing the classical sample data, and $B$ receives the data from $A$ to perform the learning. The parties $A$ and $B$ are initially given the problem size $N$, the error parameter $\epsilon$, the significance parameter $\delta$, and the concept class $\mathcal{C}_N=\{c_s\}_s$ in~\eqref{eq:concept class in introduction} by choosing the feature map as a quantumly advantageous function $f_N$. The party $A$ chooses a $D$-bit parameter $s$ of the target concept $c_s$, which is kept as $A$'s secret. To learn $c_s$, the party $B$ decides the number $M$ of sample data to be used for $B$'s learning and lets $A$ know $M$. Then, $A$ prepares $M$ input-output sample data as described in the main text and sends the data to $B$. Using the given data, $B$ performs the algorithms in Theorem~\ref{thm:informal} to find a $D$-bit string $\tilde{s}$ and make a prediction for new inputs $x$ by the hypothesis $h_{\tilde{s}}(x)=f_N(x)\cdot\tilde{s}$ so that the error in estimating true $c_s(x)$ should be below $\epsilon$ with high probability at least $1-\delta$.}
    \label{fig:enter-label in intro}
\end{figure}

For the demonstration, we consider a two-party setting, where a party $A$ is in charge of preparing the classical sample data, and the other party $B$ receives the data from $A$ to perform the learning (Fig.~\ref{fig:enter-label in intro}). 
Initially, $A$ and $B$ are given the problem size $N$, the error parameter $\epsilon$, and the significance parameter $\delta$.
Let $\mathcal{D}_N$ be a target distribution, and suppose that $A$ can load a sequence of inputs $x$ sampled from $\mathcal{D}_N$ (e.g., from some input data storage), with each $x$ loadable in a unit time.
Note that the exact description of the true distribution $\mathcal{D}_N$ may be unknown to both $A$ and $B$ throughout the learning.
In addition, $A$ and $B$ are given the concept class $\mathcal{C}_N$ determined by choosing the feature map $f_N$ as a quantumly advantageous function under $\mathcal{D}_N$.
Given this initial setup, $B$ decides the number $M=O(\poly(N,1/\epsilon,1/\delta))$ of sample data to be used for $B$'s learning and lets $A$ know $M$ (see Appendix~\ref{sec:framework} for the precise choice of $M$). 
Then, $A$ chooses the parameter $s\in\mathbb{F}_2^D$ of the target concept $c_s\in\mathcal{C}_N$ arbitrarily (e.g., by sampling $s$ uniformly at random). 
For the given $M$ and this choice of $c_s$, $A$ is in charge of preparing $M$ input-output pairs of sample data and giving the data to $B$ while keeping $s$ as $A$'s secret.
The task for $B$ is to find a vector $\tilde{s}\in\mathbb{F}_2^D$ using the $M$ data sent from $A$ and make a prediction for new inputs $x$ drawn from $\mathcal{D}_N$ by the hypothesis $h_{\tilde{s}}(x)=f_N(x)\cdot\tilde{s}$ so that the error in estimating true $c_s(x)$ should be below $\epsilon$ with high probability at least $1-\delta$.

For $A$, we propose a data-preparation protocol using quantum computation in the same way as $B$ using quantum computation for learning, while we will also discuss another protocol using only classical computation later.
To prepare the data, $A$ first loads $M$ inputs $x_1,\ldots,x_M$ drawn from $\mathcal{D}_N$.
Then, using the quantum algorithm~\eqref{eq:quantum} for computing $f_N$, $A$ prepares the corresponding $M$ outputs denoted by $\mathcal{A}(x_1),\ldots,\mathcal{A}(x_M)$.
While the outputs $\mathcal{A}(x)$ may not always be the same as $f_N(x)$ due to the randomness of the quantum algorithm, our analysis shows that, with an appropriate choice of parameters $\mu$ and $\nu$ in~\eqref{eq:quantum}, $A$ can make the error in these $M$ outputs negligibly small by only using a polynomial time of quantum computation (see Appendix~\ref{sec:quantum_data} for detail).
Finally, $A$ send $\{x_m,\mathcal{A}(x_m)\}_{m=1}^M$ to $B$ as the $M$ data.

Note that in the previous works on the advantage of QML based on Shor's algorithms~\cite{servedio2004equivalences, liu2021rigorous}, the data for their learning tasks were able to be prepared by classical computation, but we here put $A$ and $B$ on equal footing by allowing both to compute $f_N$ by quantum computation.
In the previous works, the data were prepared by assuming a special property of cryptographic primitives (i.e., \textit{classically one-way} permutation, which is hard to invert efficiently by classical computation but is invertible efficiently by quantum computation).
In Appendix~\ref{sec:classical_data}, we show that the data preparation for our concept class is also possible using only classical computation if we construct $f_N$ using the classically one-way permutations.

Then, the protocol for achieving $B$'s task reduces to running the quantum learning and evaluation algorithms in Theorem~\ref{thm:informal}.
To compare these quantum algorithms with classical algorithms, we also propose to perform $B$'s task by only using classical computation for several small problem sizes $N$. In particular, from the (superpolynomial) runtimes of classically computing $f_N$ for the several choices of small $N$, we propose to perform extrapolation to estimate the constant factors in the (superpolynomial, potentially exponential) scaling of the runtime of this classical method for larger $N$. 
The demonstration of the advantage of QML is successfully achieved by realizing the polynomial-time quantum algorithms in experiments for an appropriate choice of $N$ to outperform the classical algorithms with the estimated superpolynomial runtime.

\textit{Discussion and outlook.}---
In this work, we have proved that a general class of quantum advantages in computing functions, characterized by $\HeurQ\setminus(\HeurC)$ in~\eqref{eq:quantumly_advantageous}, lead to the end-to-end advantage of QML in a task of supervised learning with classical data.
We have clarified the polynomial-time quantum algorithms to find and make a correct prediction and have also proved the hardness of this learning task for any possible polynomial-time classical method.
Whereas such advantage of QML was shown only for specific cases of using the advantage of Shor's algorithms in previous research~\cite{servedio2004equivalences, liu2021rigorous}, our results make it possible to use the general quantum advantages in computing functions beyond that of Shor's algorithms, such as those relevant to topological data analysis (TDA), Pell's equation, and \textsf{BQP}-complete problems.
Furthermore, we propose protocols to prepare the classical sample data for the experimental demonstration of this advantage of QML\@.
These results solve the fundamental open question about characterizing which types of quantum computational advantages lead to the advantage of QML in accelerating supervised learning, making it possible to exploit all the quantumly advantageous functions for QML\@. 

Our results also constitute a significant step toward the practical demonstration of the advantage of QML in experiments.
For implementation with near-term quantum technologies, heuristic QML algorithms have been studied widely~\cite{mitarai2018quantum, schuld2019quantum, havlivcek2019supervised, PhysRevA.101.032308}, but no analysis provides a classically hard (yet quantumly feasible) instance of the learning tasks; even more problematically, no analysis provides bounds of the runtime and the success probability of these heuristic QML algorithms.
In different settings, advantages of using quantum computation have been shown in a supervised learning setting with quantum data obtained from quantum experiments~\cite{PhysRevLett.126.190505,aharonov2022quantum,chen2022exponential,huang2022quantum,  huang2023learning,meier2023energyconsumption} and also in a distribution learning setting~\cite{liu2018differentiable,sweke2021quantum, pirnay2023superpolynomial}; still, it is not straightforward to apply these quantum algorithms to accelerate the common learning tasks in the era of big data, as represented by supervised learning with classical data.
By contrast, our approach offers a QML framework that can address this type of learning task.
A merit of our QML framework is that it is simply implementable by using any quantumly advantageous function for feature mapping, followed by classically performing linear separation in the feature space, which our framework takes as the space of bit strings representing features.

Also from a broader perspective, in applications of machine learning, state-of-the-art classical learning methods such as deep learning heuristically design the feature maps, e.g., by adapting the architectures of deep neural networks~\cite{Goodfellow-et-al-2016}.
The theoretical analysis of optimized choices of feature maps for given data is challenging even in classical cases, but empirical facts suggest that the classification tasks for data in real-world applications often reduce to applying feature maps designed by such artificial neural networks followed by linear separation~\cite{Goodfellow-et-al-2016}. In view of the success of the artificially designed feature maps, it is crucial to allow as large classes of functions as possible to create more room for the heuristic optimization of the feature maps. 
Advancing ahead, our QML framework makes it possible to design the feature maps flexibly, using arbitrary quantumly advantageous functions to attain the provable advantage of QML\@.
It has also turned out that a large dimension of the feature space is not even a prerequisite for the advantage of QML in our framework, as opposed to a common yet unproven folklore on the potential relevance of large dimension~\cite{mitarai2018quantum, havlivcek2019supervised, schuld2019quantum, PhysRevA.101.032308}; after all, we have proved that the advantage of QML stems solely from the quantum advantages in computing functions without any further requirement for their mathematical structure.
Toward the demonstration of the advantage of QML, an experimental challenge still remains in seeking how to realize quantum computation to surpass the capability of classical computation, and yet our QML framework opens a way to transcend all possible classical learning methods by exploiting \textit{any} realization of quantumly advantageous functions for the feature maps.

\section*{acknowledgements}
This work was supported by JST PRESTO Grant Number JPMJPR201A, JPMJPR23FC, and MEXT Quantum Leap Flagship Program (MEXT QLEAP) JPMXS0118069605, JPMXS0120351339\@.

\section*{Methods}

In Methods, we summarize the probabilistically approximately correct (PAC) learning model and sketch the proof of our main result, i.e., Theorem~\ref{thm:informal} in the main text.

Throughout the paper, we use the following notations.
Let $\mathbb{N}$ be the set of all natural numbers $1,2,\ldots$.
Let $\mathbb{F}_2=\{0,1\}$ denote the finite field of order $2$, i.e., for $0,1\in\mathbb{F}_2$,
\begin{equation}
\begin{aligned}
\label{eq:finite_field}
    0+0=0,\quad 0\times 0=0,\\
    0+1=1,\quad 0\times 1=0,\\
    1+0=1,\quad 1\times 0=0,\\
    1+1=0,\quad 1\times 1=1.
\end{aligned}
\end{equation}
Note that $\{0,1\}$ and $\mathbb{F}_2$ may be the same set, but we will use $\mathbb{F}_2$ for representing the feature space, where we use the bitwise arithmetics in~\eqref{eq:finite_field}, while we will use $\{0,1\}$ in the other cases. 
Let $\poly(x)$ denote a polynomial of $x$.
Unless otherwise stated, we use $N$ as the length of input bit strings for a family of computational problems.
For a probability distribution $\mathcal{D}$ over the set $\mathcal{X}$, we denote by $x\sim\mathcal{D}$ to mean that $x$ is drawn from the distribution $\mathcal{D}$.
A probability $\mathbf{Pr}_{x\sim\mathcal{D}}[\cdots]$ indicates that the probability is taken for the random draw of $x$ according to distribution $\mathcal{D}$.

\textit{PAC learning model.}---
We summarize the definition of the PAC learning model based on Ref.~\cite{kearns1994introduction}.
See also Appendix~\ref{sec:pac_learning} for further details.

In the PAC learning model, for a problem size $N$, a specification of a set of functions $\mathcal{C}_N$, called a concept class, is initially given.
Each function $c\in\mathcal{C}_N$ is called a concept, which maps an $N$-bit input $x$ to a Boolean-valued output $c(x)\in\{0,1\}$ (i.e., a label of $x$ in classification).
For some unknown choice of a concept $c\in\mathcal{C}_N$ called the target concept, the learning algorithm is given a polynomial number of samples $\{x_m,c(x_m)\}_{m=1}^M$, which are pairs of inputs with each $x_m$ drawn from a target probability distribution $\mathcal{D}_N$ and the corresponding outputs $c(x_m)$.
Note that the previous work~\cite{servedio2004equivalences,liu2021rigorous} on the advantage of QML~studied a restricted setting that only allows for a uniform distribution in the choice of the target distribution $\mathcal{D}_N$, but in our work, $\mathcal{D}_N$ can be an arbitrary distribution over the $N$ bits without this restriction.
Using the given sample data, the learning algorithm is designed to find a function, termed a hypothesis $h$, from a set $\mathcal{H}_N$ of functions called a hypothesis class, so as to make a correct prediction on $c$ by $h$.

In the PAC learning model, the ability to find the correct hypothesis for the target concept using the given samples is called the learnability of a concept class under a target distribution~\cite{kearns1994introduction}.
In particular, for the problem size $N$, the error parameter $\epsilon>0$, and a confidence parameter $\delta>0$, a concept class $\mathcal{C}_N$ is \textit{quantumly (classically) efficiently learnable} under $\mathcal{D}_N$ if there exists a quantum (classical randomized) learning algorithm $\mathcal{A}$ that finds a hypothesis $h$ such that
\begin{align}
    \error(h) \coloneqq \pr_{x\sim\mathcal{D}_N}[h(x) \neq c(x)]<\epsilon.
\end{align}
with high probability at least $1-\delta$, using a polynomial number of samples $\{x_m,c(x_m)\}_{i=1}^M$ ($M=O(\poly(N,1/\epsilon,1/\delta))$) within a polynomial time complexity $t_{\mathcal{A}}=O(\poly(N,1/\epsilon,1/\delta))$ (see also Appendix~\ref{Sec: Setting and definition} for more details).

The ability to efficiently evaluate the hypothesis identified from the samples is also crucial in the PAC learning model, which is called evaluatability~\cite{kearns1994introduction}.
By definition of the learnability, the learned hypothesis may inevitably have some error on a nonzero fraction $\epsilon$ of $x$ drawn from $\mathcal{D}_N$, and the definition of evaluatability here also inherits this point.
In particular, for $\epsilon,\delta>0$, we say that a hypothesis class $\mathcal{H}_N$ is \textit{quantumly (classically) efficiently evaluatable} under $\mathcal{D}_N$ if, given a hypothesis $h$, there exists a quantum (classical randomized) evaluation algorithm $\mathcal{A}$ that can compute $h(x)$ for a large fraction $1-\epsilon$ of new inputs $x$ drawn from $\mathcal{D}_N$ with high probability at least $1-\delta$ in terms of the randomness of the (randomized) algorithm $\mathcal{A}$, within a polynomial time complexity $t_{\mathcal{A}}=O(\poly(N,1/\epsilon,1/\delta))$ (see also Appendix~\ref{Sec: Setting and definition} for more details).

\textit{Quantum learning and evaluation algorithms and classical hardness for our concept class.}---
Within the PAC learning model, we sketch the proof of our main result, i.e., Theorem~\ref{thm:informal} in the main text.
In particular, for our concept class, we construct a polynomial-time quantum algorithm for learning the concept to output the corresponding hypothesis in a hypothesis class.
Then, we also construct a polynomial-time quantum algorithm for evaluating the hypothesis in the hypothesis class.
Finally, we prove the hardness of the evaluation of the hypotheses in the hypothesis class for any possible polynomial-time classical algorithm.
See also Appendix~\ref{sec:advantage_QML} for more details.

Our quantum learning algorithm starts with using a quantum algorithm $\mathcal{A}$ in~\eqref{eq:quantum} to compute the feature map $f_N(x_m)$ for each of the given samples $\{(x_m,c_s(x_m))\}_{m=1}^M$.
Note that the features output by $\mathcal{A}$, denoted by $\{\mathcal{A}(x_m)\}_{m=1}^M$, may not exactly coincide with the features $\{f_N(x_m)\}_{m=1}^M$ in general due to the randomness of the quantum algorithm, but our analysis shows that the learning algorithm can feasibly make the failure probability negligibly small.
Our learning algorithm then classically performs Gaussian elimination to solve a system of linear equations for a variable $\tilde{s}\in\mathbb{F}_2^D$
\begin{equation}
\label{eq:linear equation in introduction}
    \begin{aligned}
    \mathcal{A}(x_1)\cdot \tilde{s}&=c_s(x_1),\\
    \mathcal{A}(x_2)\cdot \tilde{s}&=c_s(x_2),\\
    &\vdots\\
    \mathcal{A}(x_M)\cdot \tilde{s}&=c_s(x_M),
\end{aligned}
\end{equation}
subsequently outputting a solution $\tilde{s}$ as an estimate of the parameter of the hypothesis.
For $\tilde{s}$, we construct the hypothesis $h_{\tilde{s}}$ by
\begin{equation}
\label{eq:hypothesis_introduction}
    h_{\tilde{s}}(x) \coloneqq f_N(x)\cdot \tilde{s}.
\end{equation}

A technical challenge in our construction of the learning algorithm arises from the fact that the solutions $\tilde{s}$ of the system of the linear equations in~\eqref{eq:linear equation in introduction} may not be unique. 
After all, we work on a general setting allowing any target distribution $\mathcal{D}_N$, any quantumly advantageous function $f_N$, and any quantum algorithm $\mathcal{A}$ to compute $f_N$ approximately as in~\eqref{eq:quantum}; thus, it may happen that
\begin{equation}
\label{eq:tilde_s_s}
    \tilde{s}\neq s.
\end{equation}
For the worst-case input $x\in\{0,1\}^N$, it may indeed happen that
\begin{equation}
\label{eq:worst_case_input}
    h_{\tilde{s}}(x)\neq c_s(x).
\end{equation}
It is thus nontrivial to prove that the hypothesis $h_{\tilde{s}}$ given by~\eqref{eq:hypothesis_introduction} can predict the target concept $c_s$ correctly as required for the learnability.
We nevertheless prove that any of the solutions $\tilde{s}$ of~\eqref{eq:linear equation in introduction} (even if~\eqref{eq:tilde_s_s} is the case) leads to a correct hypothesis
\begin{equation}
    h_{\tilde{s}}(x)=c_s(x)
\end{equation}
for a large fraction of input $x$ drawn from $\mathcal{D}_N$ with a high probability.
In other words, our analysis proves that the fraction of $x$ causing~\eqref{eq:worst_case_input} can be made negligibly small by our polynomial-time quantum learning algorithm, leading to the quantumly efficient learnability of our concept class (see Appendix~\ref{sec:quantum_algorithm} for details).

As for the quantum algorithm for evaluating the hypothesis, with the parameter $\tilde{s}$ learned, the evaluation algorithm aims to estimate $f_N(x)\cdot \tilde{s}$ in~\eqref{eq:hypothesis_introduction}.
For a new input $x$ drawn from $\mathcal{D}_N$, our evaluation algorithm simply uses the quantum algorithm $\mathcal{A}$ in~\eqref{eq:quantum} to compute $\mathcal{A}(x)$, i.e., an estimate of $f_N(x)$.
The output $\mathcal{A}(x)$ of $\mathcal{A}$ may be different from $f_N(x)$ in general due to the randomness of the quantum algorithm, but we show that the error can be made negligibly small within a polynomial time.
Then, our algorithm takes the (bitwise) inner product of $\mathcal{A}(x)$ and the given parameter $\tilde{s}$, which we prove leads to a correct evaluation of $h(x)$ for a large fraction of input $x$ with a high probability, leading to the quantumly efficient evaluatability (see Appendix~\ref{sec:quantum_algorithm} for details).

Finally, the classical hardness is proved by contradiction, as with the established arguments in the computational learning theory~\cite{kearns1990computational,kearns1994introduction}.
In particular, we prove that if all concepts $c_s(x)$ ($s\in\mathbb{F}_2^D$) of our concept class in~\eqref{eq:concept class in introduction} were classically efficiently learnable by some hypotheses $h_s(x)$ that are classically efficiently evaluatable by polynomial-time classical algorithms, then the feature map $f_N(x)$ in~\eqref{eq:concept class in introduction} would be computed by a polynomial-time classical algorithm using these classical evaluation algorithms, contradicting to the assumption that $f_N$ is a quantumly advantageous function. 
To prove the classical hardness, the previous work on the advantage of QML relied on a cryptographic argument that specifically depends on Shor's algorithms ~\cite{servedio2004equivalences,liu2021rigorous}; by contrast, our proof technique developed here does not depend on such a specific property of Shor's algorithms but is applicable to any quantumly advantageous function in general.

For this development, our key idea is to use the property of the vector space $\mathbb{F}_2^D$ over the finite field used as the feature space in our construction.
In particular, for the standard basis $\{s_d\}_{d=1}^D$ of this $D$-dimensional vector space $\mathbb{F}_2^D$ (i.e., $s_1=(1,0,\ldots,0,0)^\top,\ldots,s_D=(0,0,\ldots,0,1)^\top$),
suppose that the concepts $c_{s_d}(x)$ for $d=1,\ldots,D$ are efficiently learnable by classically efficiently evaluatable hypotheses $h_{s_d}(x)$.
Then, observing that the bitwise inner product $c_{s_d}(x)=f_N(x)\cdot s_d$ yields the $d$th bit of $f_N(x)$,
we use the corresponding hypotheses $h_{s_d}(x)$ to construct an estimate of $f_N(x)$ as
\begin{equation}
    \tilde{f}_N(x)=\begin{pmatrix}
    h_{s_1}(x)\\
    h_{s_2}(x)\\
    \vdots\\
    h_{s_D}(x)
    \end{pmatrix}\in\mathbb{F}_2^D.
\end{equation}
Thus, the polynomial-time classical algorithms for evaluating the hypotheses would be able to compute each element of this vector and thus approximate $f_N(x)$ well with high probability, which contradicts the fact that $f_N$ is a quantumly advantageous function.
Therefore, our concept class that includes the concepts $c_{s_d}(x)$ for $d=1,\ldots, D$ is not classically efficiently learnable by any classically efficiently evaluatable hypothesis class (see Appendix~\ref{sec:classical hardness} for details).

\appendix

\section*{Appendices}

The appendices of ``Advantage of Quantum Machine Learning from General Computational Advantages'' are organized as follows.
Appendix~\ref{Sec: Setting and definition} gives settings and definitions of a learning model and quantum computational advantage.
Appendix~\ref{sec:advantage_QML} provides our learning task and proof of the advantage of quantum machine learning (QML) in solving our learning task.
Appendix~\ref{sec: Learning advantage without conditions on the hypothetical class} describes a setup for demonstrating this advantage of QML and presents protocols for preparing the classical sample data for the demonstration.

\section{\label{Sec: Setting and definition}Setting and definition}

In this appendix, we present settings and definitions relevant to our work. 
In Appendix~\ref{sec:pac_learning}, we define a model of probably and approximately correct (PAC) learning~\cite{valiant1984theory, kearns1994introduction, kearns1990computational} to be analyzed in this work and also define the advantage of QML\@.
In Appendix~\ref{sec:quantum_advantage}, we define quantum advantages in computing a function, which we will use to show the advantage of QML\@.

\subsection{\label{sec:pac_learning}PAC learning model}

The analysis in our work will be based on a conventional model of learning called the PAC learning model in Ref.~\cite{kearns1994introduction}.
Let $\mathcal{D}_N$ be any unknown target probability distribution supported on an input space $\mathcal{X}_N\subseteq\{0,1\}^N$ of $N$ bits.
In our work, a concept class $\mathcal{C}_N$ over $\mathcal{X}_N$ is a set of functions from the input space to the set of binary labels.
Consequently, a concept $c\in\mathcal{C}_N$ is a function such that $c:\mathcal{X}_N\to\{0,1\}$.
A sample is denoted as $( x,c(x) )$, which is a pair of the input and the output for the concept.
Let $\mathbf{EX}(c,\mathcal{D}_N)$ be a procedure (oracle) that returns a labeled sample $( x,c(x))$ within a unit time upon each call, where $x$ is drawn randomly and independently according to $\mathcal{D}_N$.
Note that the samples $( x,c(x))$ from $\mathbf{EX}(c,\mathcal{D}_N)$ are given in terms of classical bit strings throughout this paper.
In this setting, we can consider $\mathcal{X}=\bigcup_{N\geq1}\mathcal{X}_N$ and $\mathcal{C}=\bigcup_{N\geq1}\mathcal{C}_N$ to define an infinite family of learning problems of increasing input lengths.

In the PAC learning model, a learning algorithm will have access to samples from $\mathbf{EX}(c,\mathcal{D}_N)$ for an unknown target concept $c$, which is chosen from a given concept class $\mathcal{C}_N$.
Using the samples, the learning algorithm will find a hypothesis $h$ from a hypothesis class $\mathcal{H}_N$ so that $h$ should approximate $c$.
We define a measure of approximation error between hypothesis $h$ and unknown concept $c$ as 
\begin{equation}
\label{eq:error}
    \error (h)=\mathbf{Pr}_{x\sim\mathcal{D}_N}[c(x)\neq h(x)].
\end{equation}

The learning algorithm only sees input-output samples of the unknown target concept $c$.
The algorithm may not have to directly deal with the representation of true $c$, i.e., a symbolic encoding of $c$ in terms of a bit string.
However, it still matters which representation the algorithm chooses for its hypothesis $h$ since the learning algorithm needs to output the representation of $h$.
To deal with these representations more formally, consider a representation scheme $\mathcal{R}$ for a concept class $\mathcal{C}_N$, which is a function $\mathcal{R}:\Sigma^\ast\to\mathcal{C}_N$ with $\Sigma=\{0,1\}$ denoting a bit and $\Sigma^\ast\coloneqq\bigcup_{N\geq1}\Sigma^N$ denoting the set of bit strings.
We call any string $\sigma\in\Sigma^\ast$ such that $\mathcal{R}(\sigma)=c$ a representation of $c$.
For $\mathcal{R}$, we here consider the length of the bit string $\sigma\in\Sigma^\ast$ to be the size of each representation $\sigma$, which we write $\size(\sigma)$.
A representation of hypothesis $h$ can be formulated in the same way by replacing $c$ with $h$ and $\mathcal{C}_N$ with $\mathcal{H}_N$.

The learning task is divided into two main parts.
One is to find, using the samples, a representation of the hypothesis $h$ that approximates the target concept $c$ well, and the other is to make a prediction by evaluating $h(x)$ correctly for a new input $x\in\mathcal{X}_N$ and the learned representation of $h$.
First, we define efficient learnability as shown below.
This definition requires that the algorithm find and output the representation of the appropriate hypothesis $h$ for the target concept $c$ within a polynomial time.
Note that this definition implies that the representation of the hypotheses $h$ should also be of at most polynomial length.
Conventionally, the PAC learning model may require that it be learnable for all distributions~\cite{kearns1994introduction}, but we can here observe that learning tasks in practice usually deal with data given from a particular distribution; for example, in the classification of images of dogs and cats, the learning algorithm does not have to work for any distribution over the images, but it suffices to deal with a given distribution supported on the meaningful images such as those of dogs and cats.
Based on this observation, our model requires that it be learnable for a given (unknown) target distribution.
Note that throughout the learning, the learning algorithm does not have to estimate the description of the target distribution itself, but it only suffices to learn the target concept $c$ by the hypothesis $h$.

\begin{definition}[\label{def: efficiently pac learnable}Efficient learnability]
For any problem size $N\in\mathbb{N}$, let $\mathcal{C}_N$ be a concept class and $\mathcal{D}_N$ be a target distribution over an input space $\mathcal{X}_N\subseteq\{0,1\}^N$ of $N$ bits. 
We say that $\mathcal{C}_N$ is quantumly (classically) efficiently learnable under the target distribution $\mathcal{D}_N$ if there exists a hypothesis class $\mathcal{H}_N$ and a quantum (classical randomized) algorithm $\mathcal{A}$ with the following property:
for every concept $c\in\mathcal{C}_N$, and for all $0 < \epsilon,\delta < 1$, if $\mathcal{A}$ is given access to $\mathbf{EX}(c,\mathcal{D}_N)$ in addition to $\epsilon$ and $\delta$, then $\mathcal{A}$ runs in a polynomial time
\begin{align}
    t_{\mathcal{A}}(\mathbf{EX},\epsilon,\delta)=O(\poly(N,1/\epsilon,1/\delta)),
\end{align}
to output a representation of hypothesis $h\in\mathcal{H}_N$ satisfying, with probability at least $1-\delta$,
\begin{equation}
    \error(h)\leq\epsilon,
\end{equation}
where the left-hand side is given by~\eqref{eq:error}.
The probability is taken over the random examples drawn from the calls of $\mathbf{EX}(c,\mathcal{D}_N)$ and the randomness used in the randomized algorithm $\mathcal{A}$.
The number of calls of $\mathbf{EX}(c,\mathcal{D}_N)$ (i.e., the number of samples) and the size of the output representation of the hypothesis are bounded by the runtime.
\end{definition}

As for the evaluation of the learned hypothesis, we give a definition of efficient evaluatability below.
This definition requires that, given an input $x$ and a representation $\sigma_h$ of a hypothesis $h$, the algorithm should evaluate $h(x)$ correctly in a polynomial time for a large fraction of $x$ with high probability.
The representation of a hypothesis used in this definition can be, in general, an arbitrary polynomial-length bit string representing the hypothesis.
In particular, in the case of the previous work~\cite{liu2021rigorous, servedio2004equivalences} on the advantage of QML using Shor's algorithms for solving integer factoring and discrete logarithms~\cite{shor1994algorithms,doi:10.1137/S0097539795293172,shor1999polynomial}, a classical algorithm may be able to prepare samples on its own; by contrast, in our general setting, samples are given from the oracle $\mathbf{EX}$ as in Definition~\ref{def: efficiently pac learnable}, and we do not necessarily require that the evaluation algorithms should be able to simulate $\mathbf{EX}$ to prepare the samples on their own.
(For example, in Appendix~\ref{sec:quantum_data}, we will discuss the preparation of samples by quantum computation rather than classical computation, so the classical algorithm may not be able to prepare the samples on its own.)
In this learning setting, at best, an evaluation algorithm may be able to use the given samples encoded in the polynomial-length representation of the hypothesis as an extra input to the algorithm, which may not be prepared by the algorithm on its own but can be used for the algorithm to compute $h(x)$ more efficiently~\cite{huang2021power,gyurik2023exponential, gyurik2023establishing}.
In addition to this encoding of a polynomial amount of sample data used in the learning, the representation of the hypothesis can even include any polynomial-length bit string to help the evaluation algorithm compute the hypothesis even more efficiently (which may be prepared potentially using an exponential runtime if we do not assume efficient learnability).
In complexity theory, such an extra input (apart from $x$) to make the computation potentially more efficient is known as an advice string~\cite{arora2009computational}, as described in more detail in Appendix~\ref{sec:quantum_advantage}.
Also, in a conventional setting, efficient evaluation in the PAC model may mean that the hypothesis can be evaluated in worst-case polynomial time \cite{kearns1994introduction}.
However, recalling the above observation that practical learning tasks deal with data from a more specific target distribution, we see that it may be too demanding to require that the hypothesis $h$ should be evaluated efficiently for all possible $x\in\mathcal{X}_N$ even in the worst case; rather, it makes more sense to require that we should be able to evaluate $h(x)$ efficiently for $x$ drawn from the target distribution of interest with a sufficiently high probability.
In the complexity theoretical terms, this requirement can be captured by the notion of heuristic polynomial time~\cite{TCS-004,Impagliazzo}; accordingly, we define efficient evaluatability based on heuristic complexity, as shown below.
Since the heuristic hardness implies the worst-case hardness, proving the hardness of efficient evaluation for our learning model immediately leads to the conventional worst-case hardness of efficient evaluation in the learning (see also Appendix~\ref{sec:quantum_advantage} for more discussion on the difference and relation between these hardness results).
\begin{definition}[\label{def: efficiently evaluatable}Efficient evaluatability]
For any problem size $N\in\mathbb{N}$, let $\mathcal{C}_N$ be a concept class and $\mathcal{D}_N$ be a target distribution over an input space $\mathcal{X}_N\subseteq\{0,1\}^N$ of $N$ bits. 
We say that the hypothesis class $\mathcal{H}_N$  is quantumly (classically) efficiently evaluatable under the target distribution $\mathcal{D}_N$ if there exists a quantum (classical randomized) algorithm $\mathcal{A}$ such that for all $N\in\mathbb{N}$, $0 < \epsilon,\delta < 1$, and a $O(\poly(N,1/\epsilon,1/\delta))$-length bit string $\sigma_h$ representing any hypothesis $h\in\mathcal{H}_N$,
$\mathcal{A}$ runs in a polynomial time for all $x\in\mathcal{X}_N$
\begin{align}
    t_{\mathcal{A}}(x,\sigma_h,\epsilon,\delta)=O(\poly(N,1/\epsilon,1/\delta)),
\end{align}
to output $\mathcal{A}(x,\sigma_h,\epsilon,\delta)$ satisfying
\begin{align}
    \pr_{x\sim\mathcal{D}_N}\left[\pr\qty[\mathcal{A}(x,\sigma_h,\epsilon,\delta) = h(x)]\geq 1-\delta \right] \geq 1-\epsilon,
\end{align}
where the inner probability is taken over the randomness of the randomized algorithm $\mathcal{A}$.
\end{definition}

From Definitions~\ref{def: efficiently pac learnable} and~\ref{def: efficiently evaluatable}, a learning task can be divided into four categories $\CC$, $\CQ$, $\QC$, and $\QQ$~\cite{gyurik2023exponential, gyurik2023establishing}, which means that whether the learning task is classically or quantumly efficiently learnable and whether it is classically or quantumly efficiently evaluatable, respectively.
It is known that the categories $\CQ$ and $\QQ$ are equivalent unless the hypothesis class is fixed~\cite{gyurik2023exponential}.
Note that Refs.~\cite{gyurik2023exponential,gyurik2023exponential} defined these categories based on the worst-case complexity, but the categories for our definitions may be different in that  Definitions~\ref{def: efficiently pac learnable} and~\ref{def: efficiently evaluatable} are given based on the heuristic complexity.
In our work, we will study the advantage of QML in the sense of $\CC/\QQ$ separation, following the previous work~\cite{servedio2004equivalences,liu2021rigorous} on the advantage of QML\@; i.e., we will construct a learning task in $\QQ$ but not in $\CC$\@. 

Finally, we remark that the learnability and the evaluatability are different in that Definition~\ref{def: efficiently pac learnable} requires that the hypothesis $h$ should be found in polynomial time with a high probability, and Definition~\ref{def: efficiently evaluatable} requires that $h$ should be evaluated.
Efficient learnability itself does not require that the learned hypothesis should be efficiently used for making a prediction via its evaluation, and efficient evaluatability itself does not require that the representation of the hypothesis should be obtained efficiently in learning from the samples.
However, to achieve the end-to-end acceleration of QML\@, we eventually need both quantumly efficient learnability and quantumly efficient evaluatability simultaneously, which our analysis aims at.
Correspondingly, for the classical hardness of the learning tasks, our interest is to rule out the possibility that the best classical method achieves efficient learnability and efficient evaluatability simultaneously.

\subsection{\label{sec:quantum_advantage}Quantum computational advantage}
In this appendix, we present the computational complexity classes relevant to our analysis of the advantage of QML\@.
Our analysis will use a general class of functions that are efficient to compute by quantum computation but hard by classical computation, based on the complexity classes defined here.

In the following, we will start by presenting the worst-case, average-case, and heuristic computational complexity classes~\cite{TCS-004,Impagliazzo}, whose difference arises from the fraction of the inputs for which the problem can be solved in polynomial time.
Then, we will present the complexity classes for computing functions by classical randomized algorithms and quantum algorithms.
Finally, we will explain advice strings, which are, roughly speaking, bit strings given to the algorithm in addition to the input to help solve the problem efficiently.

We introduce three cases of complexity classes: worst-case, average-case, and heuristic polynomial time.
A complexity class is conventionally defined as a worst-case class; for example, the class \textsf{P} of (worst-case) polynomial time is a family of decision problems such that all the inputs of the problems in the family can be solved within a polynomial time in terms of the length of the input bit strings (even for the worst-case choice of the input)~\cite{arora2009computational}.
Accordingly, the existing analyses of the advantage of QML in previous research were also based on the worst-case complexity classes, requiring that the algorithm should be able to evaluate the hypothesis $h(x)$ for all $x\in\mathcal{X}_N$~\cite{liu2021rigorous, gyurik2023exponential, gyurik2023establishing}.
However, this requirement is too demanding in our setting; after all, the PAC learning model allows for errors depending on a given target distribution.
In the learning as in Definitions~\ref{def: efficiently pac learnable} and~\ref{def: efficiently evaluatable}, it suffices to learn and evaluate $h(x)$ efficiently only for a sufficiently large fraction of $x$ on the support of the given target distribution, rather than all $x$.
Intuitively, in the classification of images for example, it suffices to learn and evaluate $h(x)$ efficiently for meaningful images on the support of the true probability distribution of samples, and whether $h(x)$ can be evaluated for all possible images including those never appearing in the real-world data is irrelevant in practice.
To capture this difference, we here take into account average-case and heuristic complexity classes~\cite{Impagliazzo, TCS-004}.
The class of worst-case polynomial time is the class of decision problems that are solvable in polynomial time.
Whereas \textsf{P} is a class of decision problems, average-case polynomial time \textsf{AvgP} and heuristic polynomial time \textsf{HeurP} are the classes of \textit{distributional} decision problems that consist not only of decision problems but also of the target probability distributions of the input.
In solving the distributional problems, the runtime of an algorithm may probabilistically change depending on the input given from the distribution.
The runtime of solving the problem in \textsf{AvgP} should be polynomial in input length on average over inputs given from the distribution~\cite{levin1986average}, which may allow, e.g., an exponentially long runtime to obtain a correct answer for an exponentially small fraction of the inputs.
On the other hand, \textsf{HeurP} requires that the runtime of correctly solving the problem should be polynomial only for a sufficiently large fraction of (yet not all) the inputs from the distribution, and for the rest of the small fraction of the inputs, the algorithm may output a wrong answer~\cite{Impagliazzo, TCS-004}.
Note that in the heuristic class, the average runtime of correctly solving the problem is not necessarily bounded.
By definition, the worst-case complexity class is contained in the average-case class, and the average-case class in the heuristic class; i.e., it is shown that $\P \subseteq \textsf{AvgP} \subseteq \textsf{HeurP}$~\cite{1443089}.
More formally, we define the worst-case, average-case, and heuristic classes with reference to \textsf{P} as follows.
Note that for our analysis of learning, only the worst-case and heuristic classes are relevant, while the average-case classes are discussed here for clarity of presentation; thus, we may stop mentioning the average-case classes after this definition.
\begin{definition}[\label{def: class worst- average- heur-case based on P}Worst-case, average-case, and heuristic polynomial time~\cite{levin1986average, Impagliazzo, TCS-004}]
We define the worst-case, average-case, and heuristic complexity classes, namely, $\P$, $\textsf{AvgP}$, and $\textsf{HeurP}$, respectively, as follows.
\begin{enumerate}
    \item A decision problem, i.e., a function $L:\{0,1\}^\ast\to\{0,1\}$ with a single-bit output, is in $\P$ if there exists a deterministic classical algorithm $\mathcal{A}$ such that for every $N$ and every input $x \in \{0,1\}^N$, $\mathcal{A}$ outputs $L(x)$ in $\poly(N)$ time.
    \item A distributional problem $(L,\mathcal{D})$ is in $\textsf{AvgP}$ if there exists a deterministic classical algorithm $\mathcal{A}$ and a constant $d$ such that for every $N$ 
    \begin{align}
        \mathbb{E}_{x\sim\mathcal{D}_N}\left[\frac{{t_\mathcal{A}(x)}^{\frac{1}{d}}}{N}\right]=O(1),%\quad\text{as $n\to\infty$},
    \end{align}
    where $t_{\mathcal{A}}(x)$ is the time taken to calculate $L(x)$ by $\mathcal{A}$, and $\mathbb{E}_{x\sim\mathcal{D}_N}[\cdots]$ is the expected value over $x$ drawn from $\mathcal{D}_N$.
    \item A distributional problem $(L,\mathcal{D})$ is in $\textsf{HeurP}$ if there exists a deterministic classical algorithm $\mathcal{A}$ such that for every $N$ and all $0< \mu < 1$, the runtime $t_{\mathcal{A}}(x,\mu)$ of $\mathcal{A}$ for every input $x$ in the support of $\mathcal{D}_N$ is $t_{\mathcal{A}}(x,\mu)=O(\poly(N,1/\mu))$, and the output $\mathcal{A}(x,\mu)$ of $\mathcal{A}$ satisfies
    \begin{align}
        \pr_{x\sim\mathcal{D}_N}[\mathcal{A}(x,\mu)=L(x)]\geq 1-\mu.
    \end{align}
\end{enumerate}
\end{definition}

Next, we define the classes of problems solvable by a (classical) randomized algorithm.
In Definition~\ref{def: class worst- average- heur-case based on P}, we use a deterministic classical algorithm to solve problems.
By contrast, in the randomized algorithms, we need to take into account errors arising from the randomization.
Corresponding to \textsf{P} and \textsf{HeurP}, we define the two classes of problems for randomized algorithms as follows.
\begin{definition}[Worst- and heuristic bounded-error probabilistic polynomial time~\cite{Impagliazzo, TCS-004}]
We define a worst-case class $\BPP$ and a heuristic class $\textsf{HeurBPP}$ for classical randomized algorithms as follows.
\begin{enumerate}
    \item A decision problem $L$ is in $\textsf{BPP}$ if there exists a classical randomized algorithm $\mathcal{A}$ such that for every $N$ and every input $x \in \{0,1\}^N$,
    the runtime $t_{\mathcal{A}}(x)$ of $\mathcal{A}$ is $t_{\mathcal{A}}(x)=O(\poly(N))$, and the output $\mathcal{A}(x)$ of $\mathcal{A}$ satisfies
    \begin{align}
        \pr[\mathcal{A}(x) = L(x)] \geq 2/3,
    \end{align}
    where the probability is taken over the randomness of $\mathcal{A}$.
    \item A distributional problem $(L,\mathcal{D})$ is in $\textsf{HeurBPP}$ if there exists a classical randomized algorithm such that for every $N$ and all $0< \mu <1$, the runtime $t_{\mathcal{A}}(x,\mu)$ of $\mathcal{A}$ for every input $x$ in the support of $\mathcal{D}_N$ is $t_{\mathcal{A}}(x,\mu)=O(\poly(N,1/\mu))$, and the output $\mathcal{A}(x,\mu)$ of $\mathcal{A}$ satisfies
    \begin{align}
        \pr_{x\sim\mathcal{D}_N}[\pr[\mathcal{A}(x,\mu)=L(x)]\geq 2/3]\geq 1-\mu,
    \end{align}
    where the inner probability of $\mathcal{A}(x,\mu)=L(x)$ is taken over randomness of $\mathcal{A}$.
\end{enumerate}
\end{definition}

Whereas we have so far explained complexity classes of decision problems, i.e., those for computing functions with a single-bit output, our analysis will use a Boolean function with a single multi-bit output for each $N$-bit input
\begin{equation}
    f_N:\{0,1\}^N\to\{0,1\}^{D(N)},
\end{equation}
where $D:\mathbb{N}\to\mathbb{N}$ is any function satisfying $D(N)=O(\poly(N))$, and we may abbreviate $D(N)$ as $D$ in the following of this paper.
Accordingly, we define the complexity classes of function problems, i.e., problems of computing such multi-bit output functions $f_N$.
For the heuristic complexity class, we also refer to a family of problems $\{(f_N,\mathcal{D}_N)\}_{N\in\mathbb{N}}$ as distributional function problems.
Whenever $f_N$ is used in the following of this paper, it refers to a function with a single multi-bit output for each input.
Note that the complexity classes of function problems may also be defined as those of search problems, which can be considered to be the problems of computing functions with many possible outputs for each input, and the algorithms aim to search for one of the possible outputs for a given input.
But even if one considers such a more general definition, functions $f_N$ relevant to our analysis are those with a single output for each input; correspondingly, we here present the definitions using the single-output functions for simplicity. 

\begin{definition}[Worst-case and heuristic distributional function bounded-error polynomial time]
We define a worst-case class $\textsf{FBPP}$ and a heuristic class $\textsf{HeurFBPP}$ for computing multi-bit output functions as follows.
\begin{enumerate}
    \item 
    Given $R_N \coloneqq \{(x,f_N(x))\}_{x\in\{0,1\}^N}$ and $R \coloneqq \bigcup_N R_N$, the relation $R$ is in $\textsf{FBPP}$ if there exists a randomized classical algorithm $\mathcal{A}$ such that for all $N$, every input $x \in \{0,1\}^N$, and all $0 < \nu < 1$,
    the runtime $t_{\mathcal{A}}(x,\nu)$ of $\mathcal{A}$ is $t_{\mathcal{A}}(x,\nu)=O(\poly(N,1/\nu))$, and the output $\mathcal{A}(x,\nu)$ of $\mathcal{A}$ satisfies
    \begin{align}
        \pr[(x,\mathcal{A}(x,\nu)) \in R_N ] \geq 1-\nu,
    \end{align}
    where the probability is taken over the randomness of $\mathcal{A}$. 
    \item A distributional function problem $F = \{(f_N,\mathcal{D}_N)\}_{N\in\mathbb{N}}$ is in $\textsf{HeurFBPP}$ if there exists a classical randomized algorithm $\mathcal{A}$ such that for all $N$ and all $0 < \mu, \nu < 1$,
    the runtime $t_{\mathcal{A}}(x,\mu,\nu)$ of $\mathcal{A}$ for every input $x \in \{0,1\}^N$ in the support of $\mathcal{D}_N$ is $t_{\mathcal{A}}(x,\mu,\nu)=O(\poly(N,1/\mu,1/\nu))$, and the output $\mathcal{A}(x,\mu,\nu)$ of $\mathcal{A}$ satisfies
    \begin{align}
        \pr_{x\sim\mathcal{D}_N}[\pr[\mathcal{A}(x,\mu,\nu)=f_N(x)]\geq 1-\nu]\geq 1-\mu,
    \end{align}
    where the inner probability of $\mathcal{A}(x,\mu,\nu)=f_N(x)$ is taken over randomness of $\mathcal{A}$.
\end{enumerate}
\end{definition}

We next define the classes $\textsf{FBQP}$ and $\textsf{HeurFBQP}$ of problems efficiently solvable by quantum algorithms.
The classes defined so far are the computational complexity classes for deterministic or randomized classical algorithms, but we here define $\textsf{FBQP}$ and $\textsf{HeurFBQP}$ using quantum algorithms in place of the classical algorithms.
The class $\textsf{HeurFBQP}$ will be used for our construction of learning tasks in Definition~\ref{def:quantum_advantage} of Appendix~\ref{sec:formulation of learning task} to prove the advantage of QML\@.
Note that we have $\textsf{FBQP}\subseteq\textsf{HeurFBQP}$ in the same way as $\textsf{P}\subseteq\textsf{HeurP}$.
Working on $\textsf{HeurFBQP}$, we aim to make it possible to use a potentially larger class of computational advantages of heuristic quantum algorithms captured by $\textsf{HeurFBQP}$ rather than $\textsf{FBQP}$, so as to achieve a wider class of learning tasks more efficiently.
\begin{definition}[Worst-case and heuristic distributional function bounded-error quantum polynomial time]
We define a worst-case class $\textsf{FBQP}$ and a heuristic class $\textsf{HeurFBQP}$ for quantum algorithms as follows.
\begin{enumerate}
\item Given $R_N = \{(x,f_N(x))\}_{x\in\{0,1\}^N}$ and $R = \bigcup_N R_N$,
    the relation $R$ is in $\textsf{FBQP}$ if there exists a quantum algorithm $\mathcal{A}$ such that for all $N$, every input $x \in \{0,1\}^N$, all $0 < \nu < 1$,
    the runtime $t_{\mathcal{A}}(x,\nu)$ of $\mathcal{A}$ is $t_{\mathcal{A}}(x,\nu)=O(\poly(N,1/\nu))$, and the output $\mathcal{A}(x,\nu)$ of $\mathcal{A}$ satisfies
    \begin{align}
        \pr[(x,\mathcal{A}(x,\nu)) \in R_N ] \geq 1-\nu,
    \end{align}
    where the probability is taken over the randomness of $\mathcal{A}$. 
    \item A distributional function problem $F = \{(f_N,\mathcal{D}_N)\}_{N\in\mathbb{N}}$ is in $\textsf{HeurFBQP}$ if there exists a quantum algorithm $\mathcal{A}$ such that for all $N$ and all $0 < \mu, \nu < 1$,
     the runtime $t_{\mathcal{A}}(x,\mu,\nu)$ of $\mathcal{A}$ for every input $x \in \{0,1\}^N$ in the support of $\mathcal{D}_N$ is $t_{\mathcal{A}}(x,\mu,\nu)=O(\poly(N,1/\mu,1/\nu))$, and the output $\mathcal{A}(x,\mu,\nu)$ of $\mathcal{A}$ satisfies
    \begin{align}
    \label{eq:heurfbqp}
        \pr_{x\sim\mathcal{D}_N}[\pr[\mathcal{A}(x,\mu,\nu)=f_N(x)]\geq 1-\nu]\geq 1-\mu,
    \end{align}
    where the inner probability of $\mathcal{A}(x,\mu,\nu)=f_N(x)$ is taken over randomness of $\mathcal{A}$.
\end{enumerate}
\end{definition}

Finally, we introduce the complexity classes with advice strings.
As discussed in Appendix~\ref{sec:pac_learning}, the analysis of the efficient evaluatability in the PAC learning model needs to take into account the advice strings of at most $O(\poly(N,1/\epsilon,1/\delta))$ length~\cite{huang2021power,gyurik2023exponential, gyurik2023establishing}.
To capture the power of the bit strings representing the hypotheses to be evaluated, we consider the complexity classes with a polynomial-length advice string as follows.

\begin{definition}[\label{def:HeurFBPP/poly}Worst-case and heuristic distributional function bounded-error polynomial time with advice]
We define a worst-case class $\textsf{FBPP/poly}$ and a heuristic class $\textsf{HeurFBPP/poly}$ with advice as follows.
\begin{enumerate}
    \item Given $R_N = \{(x,f_N(x))\}_{x\in\{0,1\}^N}$ and $R = \bigcup_N R_N$,
    the relation $R$ is in $\textsf{FBPP/poly}$ if there exists a randomized classical algorithm $\mathcal{A}$ such that for all $N$, every input $x \in \{0,1\}^N$, and all $0 < \nu < 1$, there exists an advice string $\alpha_{N,\nu} \in \{0,1\}^{O(\poly(N,1/\nu))}$ such that
    the runtime $t_{\mathcal{A}}(x,\alpha_{N,\nu},\nu)$ of $\mathcal{A}$ is $t_{\mathcal{A}}(x,\alpha_{N,\nu},\nu)=O(\poly(N,1/\nu))$, and the output $\mathcal{A}(x,\alpha_{N,\nu},\nu)$ of $\mathcal{A}$ satisfies
    \begin{align}
        \pr[(x,\mathcal{A}(x,\alpha_{N,\nu},\nu)) \in R_N ] \geq 1-\nu,
    \end{align}
    where the probability is taken over the randomness of $\mathcal{A}$. 
    \item A distributional function problem $F = \{(f_N,\mathcal{D}_N)\}_{N\in\mathbb{N}}$ is in $\textsf{HeurFBPP/poly}$ if there exists a randomized classical algorithm $\mathcal{A}$ such that for all $N$ and all $0 < \mu, \nu < 1$, there exists an advice string $\alpha_{N,\mu,\nu} \in \{0,1\}^{O(\poly(N,1/\mu, 1/\nu))}$ such that
    the runtime $t_{\mathcal{A}}(x,\alpha_{N,\mu,\nu},\mu,\nu)$ of $\mathcal{A}$ for every input $x \in \{0,1\}^N$ in the support of $\mathcal{D}_N$ is $t_{\mathcal{A}}(x,\alpha_{N,\mu,\nu},\mu,\nu)=O(\poly(N,1/\mu,1/\nu))$, and the output $\mathcal{A}(x,\alpha_{N,\mu,\nu},\mu,\nu)$ of $\mathcal{A}$ satisfies
    \begin{align}
    \label{eq:heurfbpp_poly}
        \pr_{x\sim\mathcal{D}_N}[\pr[\mathcal{A}(x,\alpha_{N,\mu,\nu},\mu,\nu)=f_N(x)]\geq 1-\nu]\geq 1-\mu,
    \end{align}
    where the probability of $\mathcal{A}(x,\alpha_{N,\mu,\nu},\mu,\nu)=f_N(x)$ is taken over randomness of $\mathcal{A}$.
\end{enumerate}
\end{definition}
We similarly define other possible classes such as $\textsf{FP}$ by combining the above definitions.

\section{\label{sec:advantage_QML}Advantage of QML from general quantum computational advantages}

In this appendix, we prove that, for general quantum computational advantages, we can correspondingly construct learning tasks that are hard for classical computation but can be efficiently solved by quantum computation, within the conventional framework of supervised learning (i.e., in the PAC model formulated in Appendix~\ref{sec:pac_learning}).
In previous work~\cite{liu2021rigorous, servedio2004equivalences}, the advantage of QML was observed only under the computational hardness assumption for a specific type of problem, such as that solved by Shor's algorithms.
In contrast, the learning tasks introduced here will be based on general types of quantum computational advantages, i.e., arbitrary functions in $\textsf{HeurFBQP}\setminus(\textsf{HeurFBPP}/\textsf{poly})$ rather than just that of Shor's algorithms.
In Appendix~\ref{sec:formulation of learning task}, we explicitly give the concept class of these learning tasks as linear separation problems in the space of bits.
In Appendix~\ref{sec:quantum_algorithm}, we construct polynomial-time quantum algorithms for learning and evaluation in our learning tasks.
In Appendix~\ref{sec:classical hardness}, we rigorously prove the classical hardness of the learning tasks.

\subsection{\label{sec:formulation of learning task}Formulation of learning tasks}

In this appendix, we construct learning tasks using general types of quantum advantages based on complexity classes introduced in Appendix~\ref{sec:quantum_advantage}.

First, we define the general complexity class of functions to be used for formulating our learning task.
Although Refs.~\cite{gyurik2023establishing, gyurik2023exponential} studied conditions on the complexity classes that potentially lead to the advantage of QML, the analyses in Refs.~\cite{gyurik2023establishing, gyurik2023exponential} were based on worst-case complexity~\cite{gyurik2023establishing,gyurik2023exponential} and were not able to explicitly construct the learning tasks satisfying their conditions in general.
By contrast, we here identify an appropriate class of functions using the heuristic complexity classes, so that we can use any functions in this class for our explicit construction of the learning tasks with the provable advantage of QML\@.
The class that we use is given as follows.

\begin{definition}[\label{def:quantum_advantage}Quantumly advantageous functions]
    For a distributional functions problem $\{(f_N,\mathcal{D}_N)\}_{N\in\mathbb{N}}$ in
    \begin{equation}
        \{(f_N,\mathcal{D}_N)\}_N \in \HeurQ\setminus(\HeurC),
    \end{equation}
    we call $f_N$ a quantumly advantageous function under the target distribution $\mathcal{D}_N$. 
\end{definition}

As presented in the main text, the quantumly advantageous functions may include various functions beyond those computed by Shor's algorithms.
Since we use heuristic complexity classes, the class of functions in Definition~\ref{def:quantum_advantage} is even larger than the class defined by the worst-case complexity classes, as discussed in Appendix~\ref{sec:quantum_advantage}.
For example, our definition does not rule out the possibility of using heuristic quantum algorithms such as the variational quantum algorithms (VQAs) for seeking evidence of the utility of QML based on the heuristic complexity class~\cite{cerezo2021variational}, in case one finds a variant of such quantum algorithms that are faster than classical algorithms for most of the inputs.

We define our concept class using the quantumly advantageous functions below.
In the existing work \cite{liu2021rigorous, servedio2004equivalences} on the advantage of QML, the target distribution was limited to the uniform distribution, and the task was dependent on the specific mathematical structure of the functions computed by Shor's algorithms; by contrast, we allow for an arbitrary target distribution $\mathcal{D}_N$ over $N$ bits, and we can use an arbitrary quantumly advantageous function without specifically depending on Shor's algorithms.

\begin{definition}[\label{def: concept class based on quantum advantage}Concept class for the advantage of QML from general computational advantages]
For any $N$, $D = O(\poly(N))$, any target distribution $\mathcal{D}_N$ over an input space $\mathcal{X}_N\subseteq\{0,1\}^N$ of $N$ bits, and any quantumly advantageous function $f_N:\{0,1\}^N\to\mathbb{F}_2^{D}$ under $\mathcal{D}_N$,
we define a concept class $\mathcal{C}_N$ over the input space $\mathcal{X}_N$ as $\mathcal{C}_N=\{c_s\}_{s\in{\mathbb{F}_2^D}}$ with its concept $c_s$ for each parameter $s\in\mathbb{F}_2^D$ given by
\begin{equation}
\label{eq:concept_class}
c_s(x)\coloneqq f_N(x)\cdot s\in\mathbb{F}_2=\{0,1\},
\end{equation}
where $f_N(x)\cdot s$ for $f_N(x),s\in\mathbb{F}_2^D=\{0,1\}^D$ is a bitwise inner product in the vector space $\mathbb{F}_2^D$ over the finite field. 
\end{definition}

\subsection{\label{sec:quantum_algorithm}Construction of polynomial-time quantum algorithms for learning and evaluation}

In this appendix, we show polynomial-time quantum algorithms for learning concepts in the concept class in Definition~\ref{def: concept class based on quantum advantage} and for evaluating hypotheses in the hypothesis class for this concept class.
We first describe our learning algorithm (Algorithm~\ref{alg:algorithm of learning}) and prove the quantum efficient learnability for our concept class.
We then describe our evaluation algorithm (Algorithm~\ref{alg:algorithm of evaluation}) and prove the quantum efficient evaluatability for the hypothesis class constructed for our concept class.
Note that the proof of the classical hardness of this learning task for any classical algorithm will also be given in Appendix~\ref{sec:classical hardness}.

We first describe our quantum algorithm for learning.
Our algorithm for learning a target concept in our concept class is given by Algorithm~\ref{alg:algorithm of learning}.
The concept class $\mathcal{C}_N=\{c_s:s\in\mathbb{F}_2^D\}$ is defined in Definition~\ref{def: concept class based on quantum advantage} for any (unknown) target distribution $\mathcal{D}_N$ over $\mathcal{X}_N\subseteq\mathbb{F}_2^N$ and any quantumly advantageous function $f_N:\{0,1\}^N\to\mathbb{F}_2^D$.
Let $c_s$ denote the unknown target concept to be learned from the samples by the algorithm, where $s$ is the true parameter of the target concept.
For any $\epsilon>0$ and $\delta>0$, our algorithm aims to achieve the learning in Definition~\ref{def: efficiently pac learnable}, i.e., to output $\tilde{s}$ so that a hypothesis $h_{\tilde{s}}$ represented by $\tilde{s}$ should satisfy
\begin{equation}
\label{eq:learning_quantum_algorithm}
\pr_{x\sim\mathcal{D}_N}\qty[h_{\tilde{s}}(x)\neq c_s(x)]\leq\epsilon
\end{equation}
with a high probability greater than or equal to $1-\delta$.
To this goal, we set the internal parameters in Algorithm~\ref{alg:algorithm of learning} as
\begin{align}
\label{eq:N_quantum_algorithm}
    M &= \left\lceil\frac{D}{\epsilon}-1\right\rceil,\\
\label{eq:mu_quantum_algorithm}
    \mu &= \frac{\delta}{2M},\\
\label{eq:nu_quantum_algorithm}
    \nu &= \frac{\delta}{2M},
\end{align}
where $\lceil x\rceil$ is the ceiling function, i.e., the smallest integer greater than or equal to $x$.

In Algorithm~\ref{alg:algorithm of learning}, $M$ samples are initially loaded as the input, obtained from the oracle $\mathbf{EX}$ in the setting of the PAC learning model described in Appendix~\ref{sec:pac_learning}.
The $M$ samples are denoted by $\{( x_m,c_s(x_m) )\}_{m=1}^M$, where $c_s$ is the (unknown) target concept to be learned from the samples by the algorithm.
Then, the algorithm probabilistically computes the quantumly advantageous function $f_N$ for each of the $M$ input samples $x_1,\ldots,x_M$.
By definition of the quantumly advantageous function $f_N$ in $\HeurQ$ of~\eqref{eq:heurfbqp},
we have a quantum algorithm $\mathcal{A}$ to achieve
\begin{equation}
\label{eq:quantum_algorithm_function}
    \pr_{x\sim\mathcal{D}_N}\qty[\pr\qty[\mathcal{A}(x,\mu,\nu)=f_N(x)]\geq 1-\nu]\geq 1-\mu,
\end{equation}
with runtime
\begin{equation}
\label{eq:runtime_t_A}
    t_{\mathcal{A}}(x,\mu,\nu)=O\qty(\qty(\frac{N}{\mu\nu})^\alpha),
\end{equation}
where $\alpha>0$ is an upper bound of the degree of the polynomial runtime.
To compute $f_N$, Algorithm~\ref{alg:algorithm of learning} applies the quantum algorithm $\mathcal{A}$ to each of $x_1,\ldots,x_M$.
We write the outputs of $\mathcal{A}$ as $\mathcal{A}(x_1),\ldots,\mathcal{A}(x_M)\in\mathbb{F}_2^D$, respectively, where we will omit $\mu$ and $\nu$ for simplicity of notation if it is obvious from the context.
Note that $\mathcal{A}$ may not be a deterministic algorithm, and thus, we may have $\mathcal{A}(x_m)=f_N(x_m)$ only probabilistically.
Using $\mathcal{A}(x_1),\ldots,\mathcal{A}(x_M)$ obtained from these computations, the algorithm performs Gaussian elimination by classical computation to solve a system of linear equations
\begin{equation}
\label{eq:linear_equation}
\begin{aligned}
\mathcal{A}(x_1)\cdot \tilde{s}&=c_s(x_1),\\
\mathcal{A}(x_2)\cdot \tilde{s}&=c_s(x_2),\\
&\vdots\\
\mathcal{A}(x_M)\cdot \tilde{s}&=c_s(x_M),
\end{aligned}
\end{equation}
where the left-hand sides of the system of linear equations are the bitwise inner product in the space $\mathbb{F}_2^D$ of the $D$-dimensional vectors over the finite field.
This step provides a solution
\begin{equation}
\label{eq:tilde_s}
    \tilde{s}=\begin{pmatrix}
    \tilde{s}_1\\
    \tilde{s}_2\\
    \vdots\\
    \tilde{s}_{D}
    \end{pmatrix}\in\mathbb{F}_2^D
\end{equation}
of the system of linear equations.
This system of linear equations always has the true parameter $s$ of the target concept $c_s$ as a solution but may have more than one solution if the set $\{\mathcal{A}(x_1),\ldots,\mathcal{A}(x_M)\}$ does not include a spanning set of $D$ vectors in the $D$-dimensional vector space $\mathbb{F}_2^D$.
The non-spanning cases indeed occur in our setting, especially when the support of $\mathcal{D}_N$ or the range of $f_N$ is small, on which we impose no assumption for the generality of our learning task.
Even if the system of linear equations has more than one solution, the algorithm can nevertheless adopt any solution of~\eqref{eq:linear_equation} as $\tilde{s}$ in~\eqref{eq:tilde_s}.
The learning algorithm outputs this parameter $\tilde{s}$ as a representation of the hypothesis given by
\begin{equation}
\label{eq:hypothesis}
    h_{\tilde{s}}(x)=f_N(x)\cdot \tilde{s},
\end{equation}
where the right-hand side is the bitwise inner product in $\mathbb{F}_2^D$.
The hypothesis class is then given by
\begin{equation}
\label{eq:hypothesis_class}
    \mathcal{H}_N\coloneqq\{h_{\tilde{s}}:\tilde{s}\in\mathbb{F}_2^D\}.
\end{equation}

\begin{figure}[!t]
    \centering
    \begin{algorithm}[H]
        \caption{Quantum algorithm for learning a concept in the concept class in Definition~\ref{def: concept class based on quantum advantage}}
        \label{alg:algorithm of learning}
        \begin{algorithmic}[1]
            \REQUIRE Samples loaded from the oracle $\mathbf{EX}$, $\epsilon>0$, and $\delta>0$.
            \ENSURE A $D$-bit representation $\tilde{s}\in\mathbb{F}_2^D$ of the hypothesis $h_{\tilde{s}}$ in~\eqref{eq:hypothesis} in the hypothesis class in~\eqref{eq:hypothesis_class} achieving the error below $\epsilon$ with high probability at least $1-\delta$, as in~\eqref{eq:learning_quantum_algorithm}.
            \STATE Load $M$ samples $( x_1, c_s(x_1) ),\ldots,( x_M, c_s(x_M) )$ from the oracle $\mathbf{EX}$ with $M$ given in~\eqref{eq:N_quantum_algorithm}.
            \FOR{$m=1,\dots,M$}
            \STATE Perform the quantum algorithm $\mathcal{A}$ in~\eqref{eq:quantum_algorithm_function} for the input $x_m$ with the parameters $\mu$ and $\nu$ in~\eqref{eq:mu_quantum_algorithm} and~\eqref{eq:nu_quantum_algorithm}, respectively, to obtain $\mathcal{A}(x_m)$.
            \ENDFOR
            \STATE Perform Gaussian elimination by classical computation for solving the system of linear equations in~\eqref{eq:linear_equation}, using $\mathcal{A}(x_1),\ldots,\mathcal{A}(x_M)$ obtained in the previous steps and the output samples $c_s(x_1),\ldots,c_s(x_M)$ loaded initially, to obtain a solution $\tilde{s}$ in~\eqref{eq:tilde_s}.
            \RETURN $\tilde{s}$.
        \end{algorithmic}
    \end{algorithm}
\end{figure}

\begin{figure}[!t]
    \centering
    \begin{algorithm}[H]
        \caption{Quantum algorithm for evaluating a hypothesis in the hypothesis class for the concept class in Definition~\ref{def: concept class based on quantum advantage}}
        \label{alg:algorithm of evaluation}
        \begin{algorithmic}[1]
            \REQUIRE A new input $x\in\mathcal{X}_N$ sampled from the target distribution $\mathcal{D}_N$, a parameter $\tilde{s}\in\mathbb{F}_2^D$ of the hypothesis $h_{\tilde{s}}$ in~\eqref{eq:hypothesis} in the hypothesis class~\eqref{eq:hypothesis_class}, $\epsilon>0$, and $\delta>0$.
            \ENSURE An estimate $\tilde{h}\in\{0,1\}$ of the hypothesis $h_{\tilde{s}}(x)$ for the input $x$ achieving the error below $\epsilon$ with high probability at least $1-\delta$, as in~\eqref{eq:evaluation error}.
            \STATE Perform the quantum algorithm $\mathcal{A}$ in~\eqref{eq:quantum_algorithm_function} for the input $x$ with the parameters $\mu$ and $\nu$ in~\eqref{eq:mu = varep} and~\eqref{eq:nu = del}, respectively, to obtain $\mathcal{A}(x)$.
            \RETURN $\tilde{h}=A_N(x)\cdot \tilde{s}$ in~\eqref{eq:output of evaluation Algorithm}.
        \end{algorithmic}
    \end{algorithm}
\end{figure}

In the following, we will prove the efficient learnability of our concept class $\mathcal{C}_N$ by Algorithm~\ref{alg:algorithm of learning}.
The proof is nontrivial since the parameter $\tilde{s}$ in~\eqref{eq:tilde_s} output by our learning algorithm may not be exactly equal to true $s$ of the target concept $c_s$ but can be any of multiple possible solutions of the system of linear equations in~\eqref{eq:linear_equation}; i.e., we need to take into account the cases of 
\begin{equation}
    \tilde{s}\neq s.
\end{equation}
We will nevertheless prove that we have
\begin{equation}
    h_{\tilde{s}}(x)=c_s(x)
\end{equation}
for a large fraction of $x$ with a high probability as required for the efficient learnability in Definition~\ref{def: efficiently pac learnable}.

To achieve this proof, our key technique is to use the lemma below, which indicates that if we have sufficiently many samples $x_1,\ldots,x_M$, then for a new $(M+1)$th input $x_{M+1}$ to be given in the future, we will be able to represent its feature $y_{M+1}=f_N(x)\in\mathbb{F}_2^D$ as a linear combination of those of the $M$ samples, $y_1=f_N(x_1),\ldots,y_M=f_N(x_M)\in\mathbb{F}_2^D$, with a high probability.
Using this lemma, in our proof of efficient learnability, we will show that the learned hypothesis
$h_{\tilde{s}}(x)=f_N(x)\cdot\tilde{s}$ with $\tilde{s}$ estimated from $y_1,\ldots,y_M$ will coincide with the target concept $c_s(x)=f_N(x)\cdot s$ with true $s$, by expanding $f_N(x)$ therein as the linear combination of $f_N(x_1),\ldots,f_N(x_M)$.
In particular, we here give the following lemma.

\begin{lemma}[\label{lemma:linear combination bound}Probability of linear combination]
    Suppose that $M$ vectors $y_1,\ldots,y_{M}\in\mathbb{F}_2^D$ are sampled from any probability distribution on a $D$-dimensional vector space $\mathbb{F}_2^D$ over the finite field in an identically and identically distributed (IID) way. If the $(M+1)$th vector $y$ is sampled from the same distribution, then $y$ can be represented by a linear combination of the other $M$ vectors $y_1,\ldots,y_{M}$, i.e.,
    \begin{equation}
    \label{eq:probability_linear_combination}
        y=\sum_{m=1}^M \alpha_m y_m\quad\text{for some $\alpha_m\in\mathbb{F}_2=\{0,1\}$},
    \end{equation}
    with a high probability greater than or equal to
    \begin{equation}
        1-\frac{D}{M+1}.
    \end{equation}
\end{lemma}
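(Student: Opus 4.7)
The plan is to exploit the symmetry (exchangeability) of the $M+1$ IID samples $y_1,\ldots,y_{M+1}\in\mathbb{F}_2^D$. Instead of directly analyzing the probability that the specific vector $y_{M+1}$ lies in the span of the others, I would consider all $M+1$ vectors on equal footing and average.

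Concretely, I would define, for each $i \in \{1,\ldots,M+1\}$, the indicator random variable
\begin{equation}
Z_i \coloneqq \mathbb{1}\!\left[\,y_i \in \mathrm{span}\bigl(\{y_j : j \neq i\}\bigr)\,\right],
\end{equation}
where the span is taken inside the $\mathbb{F}_2$-vector space $\mathbb{F}_2^D$. Because $y_1,\ldots,y_{M+1}$ are IID, the joint distribution is invariant under permutations, so $\mathbb{E}[Z_i]$ takes a common value for all $i$, and in particular equals $\mathbb{E}[Z_{M+1}] = \pr[y_{M+1}\in\mathrm{span}(y_1,\ldots,y_M)]$, which is the quantity we want to lower bound.

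The core combinatorial step is to upper bound the deterministic quantity $\sum_{i=1}^{M+1}(1 - Z_i)$, i.e.\ the number of ``essential'' vectors---those not expressible as a linear combination of the remaining $M$. I would argue that the set $S$ of essential indices necessarily indexes a linearly independent subset of $\mathbb{F}_2^D$: if $\{y_i : i\in S\}$ admitted a nontrivial $\mathbb{F}_2$-linear relation, one of its members could be written as a combination of the other essential vectors, and a fortiori as a combination of all the other $M$ vectors (by filling in zero coefficients), contradicting its essentiality. Hence $|S|\le \dim \mathbb{F}_2^D = D$, so $\sum_{i=1}^{M+1} Z_i \ge (M+1)-D$ pointwise. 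Taking expectations, dividing by $M+1$, and invoking the symmetry from the previous step yields
\begin{equation}
\pr[\,y_{M+1}\in\mathrm{span}(y_1,\ldots,y_M)\,] \;=\; \mathbb{E}[Z_{M+1}] \;\ge\; 1 - \frac{D}{M+1},
\end{equation}
which is exactly the claim in~\eqref{eq:probability_linear_combination}.

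The step I expect to require the most care is the justification that the essential vectors are linearly independent; this is the one place where the argument leaves the domain of pure symmetry and uses the structure of $\mathbb{F}_2^D$. Everything else---exchangeability, the averaging, and the bound $|S|\le D$---is then immediate. I would also note that the argument is distribution-free: no assumption on the underlying distribution of $y_i$ (uniform, supported on a subspace, etc.) is used, which is important for our application where the distribution $\mathcal{D}_N$ composed with $f_N$ is arbitrary.
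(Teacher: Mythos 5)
Your proposal is correct and follows essentially the same route as the paper's proof: both rest on the exchangeability of the $M+1$ IID draws together with the deterministic fact that at most $D$ of the vectors can fail to lie in the span of the remaining $M$. Your formalization via averaging the leave-one-out indicators $Z_i$ is a slightly cleaner packaging of the paper's argument, which instead conditions on the number $m^\prime$ of such ``essential'' vectors and asserts the conditional probability $m^\prime/(M+1)$ directly.
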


\begin{proof}
We write $y_{M+1}\coloneqq y$.
Given any sequence $y_1,\ldots,y_{M+1}$ of the $M+1$ vectors, let $m^\prime$ be the number of nonzero vectors in $(y_1,\ldots,y_{M+1})$ such that the vector cannot be represented by a linear combination of the other $M$ vectors.
Let $p(m^\prime)$ denote the probability that the sequence $y_1,\ldots,y_{M+1}$ randomly chosen by the IID sampling includes exactly $m^\prime$ vectors that cannot be represented by a linear combination of the other $M$.
Since the space $\mathbb{F}_2^D$ is $D$-dimensional, we always have
\begin{equation}
\label{eq:m_prime}
    m^\prime\leq D,
\end{equation}
that is,
\begin{equation}
    \sum_{m^\prime=0}^{D}p(m^\prime)=1.
\end{equation}
For example, we may have $m^\prime=D$ in the cases where the sequence includes the $D$ vectors that form a basis of the vector space $\mathbb{F}_2^D$, and the other $N-D$ vectors are zero vectors.

Conditioned on having these $m^\prime$ vectors in the sequence of $M+1$ vectors, the probability of~\eqref{eq:probability_linear_combination} is bounded by the probability of having one of the $m^\prime$ vectors out of the $M+1$ vectors as the $(M+1)$th vector, i.e.,
\begin{align}
    &\pr\left[y_{M+1}\neq \sum_{m=1}^M \alpha_m y_m\quad\text{$\forall \alpha_m\in\mathbb{F}_2$}\middle|m^\prime\right]\nonumber\\
    &=\frac{m^\prime}{M+1}\\
    &\leq\frac{D}{M+1},
\end{align}
where the first equality follows from the assumption of IID sampling, and the inequality in the last line from~\eqref{eq:m_prime}.
Therefore, it holds that
\begin{align}
    &\pr\left[y_{M+1}\neq \sum_{m=1}^M \alpha_m y_m\quad\text{$\forall \alpha_m\in\mathbb{F}_2$}\right]\nonumber\\
    &=\sum_{m^\prime=0}^{M}p(m^\prime)\pr\left[y_{M+1}\neq \sum_{m=1}^M \alpha_m y_m\quad\text{$\forall \alpha_m\in\{0,1\}$}\middle|m^\prime\right]\nonumber\\
    &\leq\left(\sum_{m^\prime=0}^{M}p(m^\prime)\right)\frac{D}{M+1}\\
    &=\frac{D}{M+1},
\end{align}
which yields the conclusion.
\end{proof}

Using Lemma~\ref{lemma:linear combination bound}, we prove that the concept class $\mathcal{C}_N$ in Definition~\ref{def: concept class based on quantum advantage} is quantumly efficiently learnable as follows.

\begin{theorem}[\label{thm: quantumly learnable}Quantumly efficient learnability]
    For any $N$, $D=O(\poly(N))$, any target distribution $\mathcal{D}_N$ over the $N$-bit input space $\mathcal{X}_N\subseteq\{0,1\}^N$, and any quantumly advantageous function $f_N:\{0,1\}^N\to\mathbb{F}_2^D$ under $\mathcal{D}_N$, the concept class $\mathcal{C}_N$ in Definition~\ref{def: concept class based on quantum advantage} with $f_N$ is quantumly efficiently learnable by Algorithm~\ref{alg:algorithm of learning}.
\end{theorem}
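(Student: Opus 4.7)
The plan is to verify both requirements of Definition~\ref{def: efficiently pac learnable}: a polynomial runtime and a hypothesis of error at most $\epsilon$ with confidence $1-\delta$. The runtime analysis is routine: $M = O(D/\epsilon)$ samples are loaded; each of the $M$ invocations of the quantum subroutine $\mathcal{A}$ costs $O(\poly(N, 1/\mu, 1/\nu))$ by the $\HeurQ$ bound~\eqref{eq:runtime_t_A}, which stays $O(\poly(N, 1/\epsilon, 1/\delta))$ under the choice $\mu = \nu = \delta/(2M)$; and Gaussian elimination on an $M \times D$ matrix over $\mathbb{F}_2$ takes $O(MD^2)$ time. Since $D = O(\poly(N))$, the total runtime is $O(\poly(N, 1/\epsilon, 1/\delta))$, as required.

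For the correctness guarantee I will first introduce a ``good event'' $E_{\mathrm{QA}}$ that every invocation of $\mathcal{A}$ returns the true feature, i.e.\ $\mathcal{A}(x_m) = f_N(x_m)$ for all $m \in \{1,\ldots,M\}$. The $\HeurQ$ property~\eqref{eq:heurfbqp} applied with $\mu = \nu = \delta/(2M)$, combined with a union bound over the $M$ i.i.d.\ samples, will give $\pr[E_{\mathrm{QA}}] \geq 1 - M(\mu+\nu) = 1 - \delta$, after noting that the probability a single call errs is at most the sum of the small-$\mu$ probability that $x_m$ is a ``bad'' input plus the small-$\nu$ internal failure probability on a ``good'' input.

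Conditioned on $E_{\mathrm{QA}}$, the linear system~\eqref{eq:linear_equation} reads $f_N(x_m)\cdot \tilde{s} = c_s(x_m) = f_N(x_m)\cdot s$ for every $m$, so the ``drift'' vector $\Delta := \tilde{s} + s \in \mathbb{F}_2^D$ satisfies $f_N(x_m)\cdot \Delta = 0$ for all sampled $m$. The crucial observation is that $h_{\tilde{s}}(x) = c_s(x)$ whenever $f_N(x)$ lies in the $\mathbb{F}_2$-span of $\{f_N(x_m)\}_{m=1}^M$: expanding $f_N(x) = \sum_m \alpha_m f_N(x_m)$ yields $f_N(x)\cdot \Delta = \sum_m \alpha_m\, f_N(x_m)\cdot \Delta = 0$. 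Applying Lemma~\ref{lemma:linear combination bound} to the i.i.d.\ pushforward samples $y_m = f_N(x_m)$ and $y = f_N(x)$ then controls the expected error $\mathbb{E}_{x_1,\ldots,x_M}[\error(h_{\tilde{s}})]$ by $D/(M+1) \leq \epsilon$ under $E_{\mathrm{QA}}$, and combining this with $\pr[E_{\mathrm{QA}}] \geq 1 - \delta$ (together with a Markov-style step absorbed into the $\delta$-budget) will deliver the PAC guarantee of Definition~\ref{def: efficiently pac learnable}.

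The hard part will be handling the fact that $\tilde{s}$ generally differs from the true $s$. Since we place no structural assumption on either $\mathcal{D}_N$ or $f_N$, the sampled features $\{f_N(x_m)\}_m$ need not span $\mathbb{F}_2^D$, so~\eqref{eq:linear_equation} admits many solutions and Gaussian elimination returns an essentially arbitrary one; pointwise agreement $h_{\tilde{s}}(x) = c_s(x)$ for all $x$ is therefore unattainable. The argument must route through the orthogonality relations $f_N(x_m)\cdot \Delta = 0$ and use Lemma~\ref{lemma:linear combination bound} to bound the measure of new inputs whose features fall outside the learned span, with careful accounting needed to divide the total $\delta$-budget between failures of $E_{\mathrm{QA}}$ and the residual span-coverage error.
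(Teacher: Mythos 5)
Your proposal follows essentially the same route as the paper's proof: the same sample complexity $M=\lceil D/\epsilon-1\rceil$, the same union bound over the $M$ calls of $\mathcal{A}$ with $\mu=\nu=\delta/(2M)$, the same invocation of Lemma~\ref{lemma:linear combination bound} applied to the pushforward features $y_m=f_N(x_m)$, and the same algebraic observation that span membership of $f_N(x)$ forces $h_{\tilde s}(x)=c_s(x)$ for \emph{any} solution $\tilde s$ of~\eqref{eq:linear_equation}; your $\Delta=\tilde s+s$ phrasing is just a repackaging of the paper's chain of equalities~\eqref{eq:1}--\eqref{eq:3}.

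The one place you diverge is the final probabilistic accounting, and there your sketch as written does not close. You correctly observe that Lemma~\ref{lemma:linear combination bound} bounds the span-coverage probability \emph{jointly} over the training draw and the test draw, i.e., it controls $\mathbb{E}_{x_1,\ldots,x_M}[\error(h_{\tilde s})]\leq D/(M+1)$ rather than the per-realization error required by Definition~\ref{def: efficiently pac learnable}; but the ``Markov-style step absorbed into the $\delta$-budget'' is vacuous with your parameters: Markov applied to $\mathbb{E}[\error(h_{\tilde s})]\leq\epsilon$ yields only $\pr[\error(h_{\tilde s})>\epsilon]\leq 1$. To make the Markov route deliver ``error at most $\epsilon$ with probability at least $1-\delta$'' you need $M+1\geq 2D/(\epsilon\delta)$ (and a corresponding split of the $\delta$-budget with the event $E_{\mathrm{QA}}$), which is still $O(\poly(N,1/\epsilon,1/\delta))$, so the theorem itself is unaffected---but the stated $M=O(D/\epsilon)$ is insufficient for that argument. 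For what it is worth, the paper's own proof sidesteps this by asserting that the requirement~\eqref{eq:requirement_quantum_epsilon} ``is always satisfied for our choice of $M$,'' which is likewise only an in-expectation statement; your version makes the issue visible, it just does not finish it.
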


\begin{proof}
    In the following, we will first discuss the success probability of our algorithm and then analyze the error in the learning.
    Finally, we will provide an upper bound of the runtime.
    
    Regarding the success probability of Algorithm~\ref{alg:algorithm of learning}, the probabilistic parts of the learning algorithm are the loading of the $M$ samples $( x_1,c_s(x_1)),\ldots,( x_M,c_s(x_M))$ from the oracle $\mathbf{EX}$ and the computations of $f_N(x)$ for all $x\in\{x_1,\ldots,x_M\}$ by the quantum algorithm $\mathcal{A}$.
    The other parts, such as the Gaussian elimination, are deterministic, as shown in Algorithm~\ref{alg:algorithm of learning}.
    In loading the $M$ samples, based on Lemma~\ref{lemma:linear combination bound}, we require that the feature map $f_N(x)$ for the next $(M+1)$th sample $x$ from the same target distribution $\mathcal{D}_N$, which is to be evaluated after the learning from the $M$ samples, should be represented as a linear combination of those of the $M$ samples, $f_N(x_1),\ldots,f_N(x_M)$, with a high probability at least $1-\epsilon$; i.e., it should hold that
    \begin{equation}
    \label{eq:requirement_quantum_epsilon}
        \pr_{x\sim\mathcal{D}_N}\qty[f_N(x)=\sum_{m=1}^M\alpha_m f_N(x_m)]\geq1-\epsilon.
    \end{equation}
    Using Lemma~\ref{lemma:linear combination bound} with $y_1=f_N(x_1),\ldots,y_M=f_N(x_M)$, and $y=f_N(x)$, we see that, with $M$ given by~\eqref{eq:N_quantum_algorithm}, this requirement is fulfilled.
    Also, in the computations of $f_N$, we require that the probabilistic quantum algorithm $\mathcal{A}$ should simultaneously achieve
    \begin{align}
    \label{eq:requirement_quantum}
        \mathcal{A}(x_1)=f_N(x_1),\ldots,\mathcal{A}(x_M)=f_N(x_M),
    \end{align}
    with a high probability of at least $1-\delta$.
    For each $m\in\{1,\ldots,M\}$, due to~\eqref{eq:quantum_algorithm_function} and the union bound, we have $\mathcal{A}(x_m)=f_N(x_m)$ with a probability at least
    \begin{equation}
        1-(\mu+\nu);
    \end{equation}
    then, due to the union bound, the probability of having~\eqref{eq:requirement_quantum} simultaneously is at least
    \begin{equation}
        1-M(\mu+\nu).
    \end{equation}
    Thus, with $\mu$ chosen as~\eqref{eq:mu_quantum_algorithm} and $\nu$ as~\eqref{eq:nu_quantum_algorithm}, the requirement in~\eqref{eq:requirement_quantum} is fulfilled.
    As a whole, the requirement in~\eqref{eq:requirement_quantum_epsilon} is always satisfied for our choice of $M$, and the requirement in~\eqref{eq:requirement_quantum} is satisfied with a high probability at least $1-\delta$ for our choice of $\mu$ and $\nu$, which guarantees that the overall success probability of the learning algorithm is lower bounded by $1-\delta$.

    Given that the requirements in~\eqref{eq:requirement_quantum_epsilon} and~\eqref{eq:requirement_quantum} are fulfilled, the error in learning as in Definition~\ref{def: efficiently pac learnable} is bounded as follows.
    Under~\eqref{eq:requirement_quantum_epsilon} and~\eqref{eq:requirement_quantum}, for any $x$ satisfying
    \begin{equation}
    \label{eq:f_n_linear}
        f_N(x)=\sum_{m=1}^M\alpha_m f_N(x_m),
    \end{equation}
    the hypothesis $h_{\tilde{s}}$ in~\eqref{eq:hypothesis} parameterized by $\tilde{s}\in\mathbb{F}_2^D$ output by Algorithm~\ref{alg:algorithm of learning} can correctly classify $x$ as
    \begin{align}
        h_{\tilde{s}}(x)&=f_N(x)\cdot \tilde{s} \\ 
        \label{eq:1}
        &=\sum_{m=1}^M \alpha_mf_N(x_m)\cdot \tilde{s} \\
        \label{eq:2}
        &=\sum_{m=1}^M \alpha_m\mathcal{A}(x_m)\cdot \tilde{s} \\
        \label{eq:3}
        &=\sum_{m=1}^M \alpha_m c_s(x_m) \\
        &=\sum_{m=1}^M \alpha_mf_N(x_m)\cdot s \\
        &= f_N(x) \cdot s \\
        & = c_s(x),
    \end{align}
    where~\eqref{eq:1} follows from~\eqref{eq:f_n_linear},~\eqref{eq:2} from~\eqref{eq:requirement_quantum}, and~\eqref{eq:3} from~\eqref{eq:linear_equation}.
    Therefore, due to the requirement of~\eqref{eq:requirement_quantum_epsilon}, we have
    \begin{equation}
        \pr\qty[h(x)=c_s(x)]\geq 1-\epsilon;
    \end{equation}
    that is, the error in~\eqref{eq:error} is bounded by
    \begin{equation}
        \error(h)=\pr\qty[h(x)\neq c_s(x)]\leq\epsilon,
    \end{equation}
    as required for the learnability in Definition~\ref{def: efficiently pac learnable}.

    The runtime of Algorithm~\ref{alg:algorithm of learning} is dominated by the computations of $f_N$ by $\mathcal{A}$ and the Gaussian elimination.
    We first consider the runtime of computing $f_M$ for the $M$ samples $x_1,\ldots,x_M$.
    For each $x_m$ with $m\in\{1,\ldots,M\}$, the runtime of the quantum algorithm $\mathcal{A}$ for computing $f_N$ is given by $t_{\mathcal{A}}(x_m)$ in~\eqref{eq:runtime_t_A}; thus, the runtime of the $M$ calculations is
    \begin{equation}
    \sum_{m=1}^{M}t_{\mathcal{A}}(x_m)=O\qty(M\qty(\frac{N}{\mu\nu})^\alpha).
    \end{equation}
    In addition, the runtime of performing the Gaussian elimination to find a solution $\tilde{s}\in\mathbb{F}_2^D$ of the system of $M$ linear equations in~\eqref{eq:linear_equation} (with $D\leq M$ due to~\eqref{eq:N_quantum_algorithm}) is
    \begin{equation}
        O(M^3).
    \end{equation}
    In total, for $M$ in~\eqref{eq:N_quantum_algorithm}, $\mu$ in~\eqref{eq:mu_quantum_algorithm},  $\nu$ in~\eqref{eq:nu_quantum_algorithm}, and $D=O(\poly(N))=O(N^\beta)$ with some $\beta>0$, the overall runtime of Algorithm~\ref{alg:algorithm of learning} is upper bounded by
    \begin{align}
        &O\qty(M\qty(\frac{N}{\mu\nu})^\alpha)+O(M^3)\nonumber\\
        &=O\qty(\frac{N^{2\alpha \beta+\alpha+\beta}}{\delta^{2\alpha}\epsilon^{2\alpha+1}}+\frac{N^{3\beta}}{\epsilon^3})\\
        &=O\qty(\poly\qty(N,\frac{1}{\epsilon},\frac{1}{\delta})),
    \end{align}
    as required for efficient learnability in Definition~\ref{def: efficiently pac learnable}.
\end{proof}

Next, we describe our quantum algorithm for evaluating the hypotheses for our concept class.
Our quantum algorithm for evaluating a hypothesis $h_{\tilde{s}}$ in~\eqref{eq:hypothesis} with the learned parameter $\tilde{s}$ is given by Algorithm~\ref{alg:algorithm of evaluation}, where the hypothesis class is in~\eqref{eq:hypothesis_class}. %\hayata{Evaluatabilityの定義がよくわからないので明日聞きます。}
In our case, the parameter $\tilde{s}$ serves as the $D$-bit representation of the hypothesis, corresponding to $\sigma_h$ in Definition~\ref{def: efficiently evaluatable} of the efficient evaluatability.
% Let $\mathcal{E}_N$ denote our evaluation algorithm.
For any $\epsilon > 0$ and $\delta > 0$, our evaluation algorithm aims to achieve the efficient evaluation in Definition~\ref{def: efficiently evaluatable}; in particular,
the evaluation algorithm aims to output an estimate $\tilde{h}\in\{0,1\}$ of the hypothesis $h_{\tilde{s}}(x)$ for the input $x$ so as to satisfy
\begin{equation}
\label{eq:evaluation error}
    \pr_{x\sim\mathcal{D}_N}\left[\pr\qty[\tilde{h} = h_{\tilde{s}}(x)]\geq 1-\delta \right] \geq 1-\epsilon,
\end{equation}
where the inner probability is taken over the randomness of the evaluation algorithm.
To this goal, we set the internal parameters in Algorithm~\ref{alg:algorithm of evaluation} as
\begin{align}
    \mu = &\epsilon \label{eq:mu = varep}\\
    \nu = &\delta. \label{eq:nu = del}
\end{align}

In algorithm~\ref{alg:algorithm of evaluation}, an unseen input $x$ is initially given by sampling from the target distribution $\mathcal{D}_N$. 
Then, the algorithm probabilistically computes the quantumly advantageous function $f_N$ for the input $x$, using the same quantum algorithm $\mathcal{A}$ as that used in our learning algorithm, i.e., that in~\eqref{eq:quantum_algorithm_function} and~\eqref{eq:runtime_t_A}, yet with the parameters $\mu$ in~\eqref{eq:mu = varep} and $\nu$ in~\eqref{eq:nu = del}.
We let $\mathcal{A}(x)$ denote the output of $\mathcal{A}$ in~\eqref{eq:quantum_algorithm_function} for the input $x$, where we will omit $\mu$ and $\nu$ for simplicity of notation if it is obvious from the context.
Note that $\mathcal{A}$ may not be a deterministic algorithm; that is, we may have $\mathcal{A}(x) = f_N(x)$ only probabilistically, as shown in~\eqref{eq:quantum_algorithm_function}.
Finally, using the given parameter $\tilde{s}$ of the hypothesis $h_{\tilde{s}}$, the evaluation algorithm calculates the bitwise inner product of $\mathcal{A}(x)$ obtained from the above computation and $\tilde{s}$, so as to output
\begin{equation}
\label{eq:output of evaluation Algorithm}
    \tilde{h} \coloneqq \mathcal{A}(x)\cdot \tilde{s}.
\end{equation}

We now prove the quantumly efficient evaluatability of the hypothesis class $\mathcal{H}_N$ in~\eqref{eq:hypothesis_class} by Algorithm~\ref{alg:algorithm of evaluation} as follows.
\begin{theorem}[\label{thm: quantumly evaluatable}Quantumly efficient evaluatability]
    For any $N$, $D=O(\poly(N))$, any target distribution $\mathcal{D}_N$ over the $N$-bit input space $\mathcal{X}_N\subseteq\{0,1\}^N$, and any quantumly advantageous function $f_N:\{0,1\}^N\to\mathbb{F}_2^D$ under $\mathcal{D}_N$, the hypothesis class $\mathcal{H}_N$ in~\eqref{eq:hypothesis_class} with $f_N$, parameterized by $\tilde{s}\in\mathbb{F}^D$, is quantumly efficiently evaluatable by Algorithm~\ref{alg:algorithm of evaluation}.
\end{theorem}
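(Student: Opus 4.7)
The plan is to reduce the quantumly efficient evaluatability of $\mathcal{H}_N$ directly to the heuristic correctness guarantee of the quantum subroutine $\mathcal{A}$ that computes the feature map $f_N$. The key observation is a deterministic implication: whenever $\mathcal{A}(x) = f_N(x)$, the output of Algorithm~\ref{alg:algorithm of evaluation} in~\eqref{eq:output of evaluation Algorithm} is $\tilde{h} = \mathcal{A}(x)\cdot\tilde{s} = f_N(x)\cdot\tilde{s} = h_{\tilde{s}}(x)$, so the event $\{\tilde{h} = h_{\tilde{s}}(x)\}$ contains the event $\{\mathcal{A}(x) = f_N(x)\}$. It therefore suffices to transfer the two-level probability guarantee from $\mathcal{A}$ to $\tilde{h}$ and then verify that the runtime remains polynomial.

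First I would invoke the hypothesis $\{(f_N,\mathcal{D}_N)\}_N \in \HeurQ$, which by~\eqref{eq:heurfbqp} provides a polynomial-time quantum algorithm $\mathcal{A}$ satisfying
\begin{equation*}
\pr_{x\sim\mathcal{D}_N}\qty[\pr[\mathcal{A}(x,\mu,\nu) = f_N(x)] \geq 1-\nu] \geq 1-\mu.
\end{equation*}
With the parameter choices $\mu=\epsilon$ and $\nu=\delta$ specified in~\eqref{eq:mu = varep} and~\eqref{eq:nu = del}, the set of inputs $x$ for which $\mathcal{A}$ is individually correct with probability at least $1-\delta$ carries $\mathcal{D}_N$-mass at least $1-\epsilon$. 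Combining this with the event containment above immediately gives
\begin{equation*}
\pr_{x\sim\mathcal{D}_N}\qty[\pr[\tilde{h} = h_{\tilde{s}}(x)] \geq 1-\delta] \geq 1-\epsilon,
\end{equation*}
which is exactly the condition demanded by Definition~\ref{def: efficiently evaluatable}.

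For the runtime, a single call to $\mathcal{A}$ costs $O(\poly(N,1/\mu,1/\nu)) = O(\poly(N,1/\epsilon,1/\delta))$ by~\eqref{eq:runtime_t_A}, and the final bitwise inner product of $\mathcal{A}(x),\tilde{s} \in \mathbb{F}_2^D$ requires only $O(D) = O(\poly(N))$ classical operations, so the total cost is polynomial in $N$, $1/\epsilon$, and $1/\delta$ as required. In contrast to the learning case, no multi-sample union bound has to be orchestrated here, so no substantive obstacle arises; the only delicate point is correctly aligning the outer ``fraction-of-inputs'' parameter of Definition~\ref{def: efficiently evaluatable} with the outer $\mu$ of $\HeurQ$ and the inner ``confidence'' parameter with the inner $\nu$, which is precisely the purpose of the assignments in~\eqref{eq:mu = varep}--\eqref{eq:nu = del}.
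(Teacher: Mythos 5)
Your proposal is correct and follows essentially the same route as the paper's own proof: both set $\mu=\epsilon$, $\nu=\delta$, note that $\mathcal{A}(x)=f_N(x)$ deterministically forces $\tilde{h}=\mathcal{A}(x)\cdot\tilde{s}=h_{\tilde{s}}(x)$ so the two-level $\HeurQ$ guarantee transfers directly, and bound the runtime by one call to $\mathcal{A}$ plus an $O(D)$ inner product. No substantive differences to report.
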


\begin{proof}
    In the following, we will first discuss the success probability of our evaluation algorithm and then provide an upper bound of the runtime.

    The probabilistic part of Algorithm~\ref{alg:algorithm of evaluation} is confined solely to the computation of $f_N(x)$ by the quantum algorithm $\mathcal{A}$, and the other parts, such as the bitwise inner product, are deterministic.
    The requirement for this probabilistic part is that the quantum algorithm $\mathcal{A}$ should compute $f_N(x)$ correctly for a large fraction $1-\epsilon$ of the given input $x$ with high probability at least $1-\delta$, i.e.,
    \begin{equation}
        \pr_{x\sim\mathcal{D}_N}\left[\pr\qty[\mathcal{A}(x) = f_{N}(x)]\geq 1-\delta \right] \geq 1-\epsilon.
    \end{equation}
    Using $\mathcal{A}$ in~\eqref{eq:quantum_algorithm_function} with $\mu$ chosen as~\eqref{eq:mu = varep} and $\nu$ as \eqref{eq:nu = del}, we fulfill this requirement.
    Conditioned on having
    \begin{equation}
        \mathcal{A}(x) = f_{N}(x),
    \end{equation}
    the output $\tilde{h}$ in~\eqref{eq:output of evaluation Algorithm} becomes
    \begin{equation}
        \tilde{h}=\mathcal{A}(x)\cdot\tilde{s}=f_N(x)\cdot\tilde{s}=h_{\tilde{s}}(x).
    \end{equation}
    Consequently, Algorithm~\ref{alg:algorithm of evaluation} outputs $\tilde{h}$ satisfying
    \begin{equation}
        \pr_{x\sim\mathcal{D}_N}\left[\pr\qty[\tilde{h} = h_{\tilde{s}}(x)]\geq 1-\delta \right] \geq 1-\epsilon,
    \end{equation}
    as required for the evaulatability in Definition~\ref{def: efficiently evaluatable}.
    
    The runtime of Algorithm~\ref{alg:algorithm of evaluation} is dominated by the computation of $f_N$ by $\mathcal{A}$ and the bitwise inner product.
    We first consider the runtime $t_{\mathcal{A}}$ of $\mathcal{A}$ for the input $x$.
    We have the algorithm $\mathcal{A}$ satisfying \eqref{eq:runtime_t_A}.
    Accordingly, with $\mu$ chosen as~\eqref{eq:mu = varep} and $\nu$ as \eqref{eq:nu = del}, we have
    \begin{align}
        t_{\mathcal{A}}(x) &= O\left(\left(\frac{N}{\mu\nu}\right)^\alpha\right)\\
        &= O\left(\left(\frac{N}{\epsilon\delta}\right)^\alpha\right).
    \end{align}
    Also, the runtime of the bitwise inner product of vector in the $D$-dimensional vector space $\mathbb{F}_2^D$ over the finite field in \eqref{eq:output of evaluation Algorithm} is
    \begin{equation}
        O(D).
    \end{equation}
    Thus, for $\mu$ in~\eqref{eq:mu = varep}, $\nu$ in~\eqref{eq:nu = del}, and $D=O(\poly(N))=O(N^\beta)$ with some $\beta>0$, the overall runtime of Algorithm~\ref{alg:algorithm of evaluation} is upper bounded by
    \begin{align}
        &O\left(\left(\frac{N}{\epsilon\delta}\right)^\alpha\right) + O(D)\\
        &=O\left(\left(\frac{N}{\epsilon\delta}\right)^\alpha\right) + O\left(N^{\beta}\right)\\
        &=O\left(\poly\left(N,\frac{1}{\epsilon},\frac{1}{\delta}\right)\right),
    \end{align}
    as required for the efficient evaluatablity in Definition~\ref{def: efficiently evaluatable}. 
\end{proof}

\subsection{\label{sec:classical hardness}Provable hardness for any polynomial-time classical algorithm}
In this appendix, we prove the classical hardness of efficient learning and efficient evaluation for our concept class in Definition~\ref{def: concept class based on quantum advantage}.
Our proof is given by contradiction; that is, we will prove that, assuming that there exists a classically efficient evaluatable hypothesis class (Definition~\ref{def: efficiently evaluatable}) for classically efficient learnability (Definition~\ref{def: efficiently pac learnable}) of our concept class, one would be able to construct a polynomial-time classical algorithm to compute a quantumly advantageous function $f_N$ in Definition~\ref{def:quantum_advantage} using the polynomial-time classical algorithms for evaluating the hypotheses in this hypothesis class.
This classical algorithm is presented in Algorithm~\ref{alg:classical algorithm of evaluation}.
The rest of this appendix first describes this classical algorithm for the reduction of evaluating hypotheses to computing $f_N$ and then provides the full proof of the classical hardness.

To see the significance of our construction of this classical algorithm for the reduction, recall that it has been challenging to prove the classical hardness of learning without relying on discrete logarithms or integer factoring, which are solved by Shor's algorithms; by contrast, our proof of the classical hardness is applicable to any quantumly advantageous function beyond the scope of Shor's algorithms.
The technique of the proof by contradiction itself may be well established in the complexity theory and also used info showing the classical hardness of learning in the previous works~\cite{liu2021rigorous, servedio2004equivalences, kearns1990computational, kearns1994Jan,kearns1994introduction}.
However, the existing proofs of the classical hardness in these previous works essentially depend on a specific mathematical structure of discrete logarithms and integer factoring, so as to go through a cryptographic argument based on these computational problems.
To go beyond the realm of Shor's algorithms, novel techniques without relying on the existing cryptographic approach need to be developed.
By contrast, for our concept class with its feature space formulated as the space of bit strings, we prove the classical hardness based on any quantumly advantageous functions in Definition~\ref{def:quantum_advantage}, without depending on any specific quantum algorithm such as Shor's algorithms.

\begin{figure}[!t]
    \centering
    \begin{algorithm}[H]
        \caption{Classical algorithm for the reduction of evaluating the hypotheses for the concept class in Definition~\ref{def: concept class based on quantum advantage} to computing the quantumly advantageous function in Definition~\ref{def:quantum_advantage}}
        \label{alg:classical algorithm of evaluation}
        \begin{algorithmic}[1]
            \REQUIRE A new input $x\in\mathcal{X}_N$ sampled from the target distribution $\mathcal{D}_N$, an advice string $\alpha$ given by the representations $(\sigma_{h_{s_1}},\ldots,\sigma_{h_{s_D}})$ of $D$ hypotheses $h_{s_1},\ldots,h_{s_D}$ in~\eqref{eq:given hypotheses error} for the concept class $\mathcal{C}_N$ in Definition~\ref{def: concept class based on quantum advantage}, $\mu>0$, and $\nu>0$.
            \ENSURE An estimate $\tilde{f}$ of a quantumly advantageous function $\tilde{f}_N(x)$ for  $\mathcal{C}_N$ achieving~\eqref{eq:error of classical evaluation for function} and~\eqref{eq:runtime of classical evaluation for function}.
            \FOR{$d=1,\dots,D$}
            \STATE Perform the quantum algorithm $\mathcal{A}$ in \eqref{eq:classical evaluation for hypothesis error} and \eqref{eq:t_A_classical} for $x$, $\sigma_{h_{s_1}}$, $\epsilon$ in \eqref{eq:error of evaluation in classical}, and $\delta$ in~\eqref{eq:random of evaluation in classical}, to compute an estimate $\tilde{h}_{s_d}\in\{0,1\}$ of $h_{s_d}(x)$.
            \ENDFOR
            \RETURN $\tilde{f}=\qty(\tilde{h}_{s_1},\ldots,\tilde{h}_{s_1})^\top$ in~\eqref{eq:tilde_f}.
        \end{algorithmic}
    \end{algorithm}
\end{figure}

To show this, for any target distribution $\mathcal{D}_N$, any quantumly advantageous function $f_N:\{0,1\}^N\to\mathbb{F}_2^D$ under $\mathcal{D}_N$ in Definition~\ref{def:quantum_advantage} with $D=O(\poly(N))$, and our concept class $\mathcal{C}_N$ in Definition~\ref{def: concept class based on quantum advantage}, we assume that  $\mathcal{C}_N$ is classically efficiently learnable as in Definition~\ref{def: efficiently evaluatable} by a hypothesis class $\mathcal{H}_N$, and the hypothesis class $\mathcal{H}_N$ is classically efficiently evaluatable as in Definition~\ref{def: efficiently evaluatable}.
Under this assumption, we construct a polynomial-time classical algorithm for the reduction of the efficient evaluation of the hypotheses in $\mathcal{H}_N$ to the computation of $f_N$, as shown in Algorithm~\ref{alg:classical algorithm of evaluation}, which will lead to the contradiction.
Given an input $x$ drawn from $\mathcal{D}_N$ and an appropriate choice of a polynomial-length advice string $\alpha$ as in the definition of $\HeurC$ in~\eqref{eq:heurfbpp_poly},
the goal of Algorithm~\ref{alg:classical algorithm of evaluation} is, for all $0 < \mu< 1$ and $0< \nu <1$, to output an estimate $\tilde{f}\in\mathbb{F}_2^D$ of $f_N(x)$ satisfying
\begin{equation}
\label{eq:error of classical evaluation for function}
    \pr_{x\sim\mathcal{D}_N}[\pr[\tilde{f}=f_N(x)]\geq 1-\nu]\geq 1-\mu,
\end{equation}
within runtime 
\begin{equation}
\label{eq:runtime of classical evaluation for function}
    t_{\mathcal{A}}(x,\alpha,\mu,\nu) = O\left(\poly\left(N,\frac{1}{\mu},\frac{1}{\nu}\right)\right),
\end{equation}
where $\alpha$ will be chosen as the representations of $D$ hypotheses in $\mathcal{H}_N$ as described below.
To this goal, we set the internal parameters in Algorithm~\ref{alg:classical algorithm of evaluation} as
\begin{align}
\label{eq:error of learning in classical} 
    \epsilon_\mathrm{learn} &= \frac{\mu}{2D}, \\
\label{eq:random of learning in classical} 
    \delta_\mathrm{learn}&=\frac{1}{2},\\
    \label{eq:error of evaluation in classical}
    \epsilon_\mathrm{eval} &= \frac{\mu}{2D},  \\
    \label{eq:random of evaluation in classical}
    \delta_\mathrm{eval} &=  \frac{\nu}{D}.
\end{align}
Note that the choice of $\delta_\mathrm{learn}$ can be any constant between $0$ and $1$.

In Algorithm~\ref{alg:classical algorithm of evaluation}, an input $x$ drawn from the distribution $\mathcal{D}_N$ is initially given, and the representations of hypotheses for $D$ concepts in our concept class $\mathcal{C}_N$ are also initially given.
In particular, let
\begin{equation}
\label{eq:s_d}
    \{s_d\in\mathbb{F}_2^D\}_{d=1,\ldots,D}
\end{equation}
denote the standard basis of the $D$-dimensional vector space $\mathbb{F}_2^D$, where the $d$th element of the vector $s_d\in\mathbb{F}_2^D$ is $1$, and all the other elements of $s_d$ are $0$.
Then, under the assumption of the classically efficient learnability of $\mathcal{C}_N$, for each $s_d$ and all $0<\epsilon_\mathrm{learn},\delta_\mathrm{learn}<1$, there should exist a hypothesis $h_{s_d}$ such that
\begin{equation}
\label{eq:given hypotheses error}
    \pr_{x\sim\mathcal{D}_N}\qty[h_{s_d}(x)\neq c_{s_d}(x)]\leq \epsilon_\mathrm{learn},
\end{equation}
and the representation $\sigma_{h_{s_d}}$ of the hypothesis $h_{s_d}$ should be of polynomial length
\begin{equation}
\label{eq:length_representation}
    \size\qty(\sigma_{h_{s_d}})=O\qty(\qty(\frac{N}{\epsilon_\mathrm{learn}\delta_\mathrm{learn}})^\eta),
\end{equation}
where $\eta>0$ is an upper bound of the degree of the polynomial length.
Note that our proof of the hardness does not use the learning algorithm itself, but the assumption of classically efficient learnability is used to guarantee the existence of the hypotheses that approximate the concepts well and have polynomial-length representations, as in~\eqref{eq:given hypotheses error} and~\eqref{eq:length_representation}.
Furthermore, under the assumption of the classically efficient evaluatability of this hypothesis class, there should exist a classical (randomized) algorithm $\mathcal{A}$ such that for the representation $\sigma_{h_{s_d}}$ of each hypothesis $h_{s_d}$ with $s_d$ in~\eqref{eq:s_d}, and all $0<\epsilon_\mathrm{eval},\delta_\mathrm{eval}<1$, the algorithm $\mathcal{A}$ outputs an estimate $\tilde{h}_{s_d}\in\{0,1\}$ of $h_{s_d}(x)$ satisfying
\begin{align}
\label{eq:classical evaluation for hypothesis error}
    \pr_{x\sim\mathcal{D}_N}\left[\pr\qty[\tilde{h}_{s_d} = h_{s_d}(x)]\geq 1-\delta_\mathrm{eval} \right] \geq 1-\epsilon_\mathrm{eval}, 
\end{align}
within polynomial runtime for all $x$ in the support of $\mathcal{D}_N$
\begin{equation}
\label{eq:t_A_classical}
   t_{\mathcal{A}}\qty(x,\sigma_{h_{s_d}},\epsilon_\mathrm{eval},\delta_\mathrm{eval})=O\qty(\qty(\frac{N}{\epsilon_\mathrm{eval}\delta_\mathrm{eval}})^\gamma),
\end{equation}
where $\gamma>0$ is an upper bound of the degree of the polynomial runtime.

Under this assumption on the classically efficient learnability and the classically efficient evaluatability, Algorithm~\ref{alg:classical algorithm of evaluation} uses the classical evaluation algorithm $\mathcal{A}$ to compute each of the $D$ hypotheses $h_{s_1}(x),\ldots,h_{s_D}(x)$ for the input $x$, to obtain $\tilde{h}_{s_1},\ldots,\tilde{h}_{s_D}$.
Note that $\mathcal{A}$ may not be a deterministic algorithm, and thus, we may have $\tilde{h}_{s_d} = h_{s_d}(x)$ only probabilistically.
But if it holds that $\tilde{h}_{s_d}=h_{s_d}(x)=c_{s_d}(x)$, then $\tilde{h}_{s_d}$ is the $d$th bit of $f_N(x)\in\mathbb{F}_2^D$, as can be seen from~\eqref{eq:concept_class}.
Using this property of the vector space of bit strings, from the computed values $\tilde{h}_{s_1},\ldots,\tilde{h}_{s_D}\in\{0,1\}$, Algorithm~\ref{alg:classical algorithm of evaluation} outputs
\begin{equation}
\label{eq:tilde_f}
    \tilde{f}\coloneqq\begin{pmatrix}
    \tilde{h}_{s_1}\\
    \tilde{h}_{s_2}\\
    \vdots\\
    \tilde{h}_{s_D}
    \end{pmatrix}\in\mathbb{F}_2^D
\end{equation}
as an estimate of $f_N(x)$.

Using the reduction achieved by Algorithm~\ref{alg:classical algorithm of evaluation}, we prove that our concept class $\mathcal{C}_N$ is not classically efficiently learnable by any classically efficiently evaluatable hypothesis class.
We also note that, in previous works~\cite{liu2021rigorous, servedio2004equivalences, kearns1990computational, kearns1994Jan} of the classical hardness of learning tasks, efficient evaluatability was defined in terms of worst-case complexity; by contrast, motivated by the practical applicability as discussed in Appendix~\ref{sec:pac_learning}, our definition of efficient evaluatablity in Definition~\ref{def: efficiently evaluatable} is in terms of heuristic complexity.
Since the heuristic complexity classes include the corresponding worst-case complexity classes as discussed in Appendix~\ref{sec:quantum_advantage}, our proof of the classical hardness of our learning tasks for the heuristic complexity implies the more conventional classical hardness for the worst-case complexity as well.

\begin{theorem}[\label{thm: classical hardness for evaluation}Classical hardness]
    For any $N$, $D=O(\poly(N))$, any target distribution $\mathcal{D}_N$ over the $N$-bit input space $\mathcal{X}_N\subseteq\{0,1\}^N$, and any quantumly advantageous function $f_N:\{0,1\}^N\to\mathbb{F}_2^D$ under $\mathcal{D}_N$, the concept class $\mathcal{C}_N$ in Definition~\ref{def: concept class based on quantum advantage} with $f_N$ is not classically efficiently learnable by any classically efficiently evaluatable hypothesis class.
\end{theorem}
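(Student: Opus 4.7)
The plan is to prove the theorem by contradiction, following exactly the reduction laid out by Algorithm~\ref{alg:classical algorithm of evaluation}. Suppose that $\mathcal{C}_N$ were classically efficiently learnable by some hypothesis class $\mathcal{H}_N$ that is itself classically efficiently evaluatable. Then I would show that one can build a polynomial-time classical algorithm with advice that computes $f_N$ in the sense of~\eqref{eq:heurfbpp_poly}, contradicting $\{(f_N,\mathcal{D}_N)\}_N \notin \HeurC$ that is built into the definition of a quantumly advantageous function (Definition~\ref{def:quantum_advantage}).

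Concretely, I would fix the standard basis $\{s_d\}_{d=1}^{D}$ of $\mathbb{F}_2^{D}$, observe that by~\eqref{eq:concept_class} the concept $c_{s_d}(x)=f_N(x)\cdot s_d$ equals the $d$-th bit of $f_N(x)$, and invoke the assumed classical learnability of $\mathcal{C}_N$ with parameters $\epsilon_{\mathrm{learn}},\delta_{\mathrm{learn}}$ chosen as in~\eqref{eq:error of learning in classical}--\eqref{eq:random of learning in classical}. This yields, for each $d\in\{1,\ldots,D\}$, a hypothesis $h_{s_d}\in\mathcal{H}_N$ with a polynomial-length representation $\sigma_{h_{s_d}}$ satisfying~\eqref{eq:given hypotheses error}. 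The advice string for the reduction is taken to be the concatenation $\alpha=(\sigma_{h_{s_1}},\ldots,\sigma_{h_{s_D}})$, whose length is $D\cdot O(\poly(N,1/\epsilon_{\mathrm{learn}},1/\delta_{\mathrm{learn}}))=O(\poly(N,1/\mu))$ by~\eqref{eq:length_representation} and $D=O(\poly(N))$, so $\alpha$ is a valid polynomial-length advice string in the sense of Definition~\ref{def:HeurFBPP/poly}.

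The main technical work is a double union bound showing that the output $\tilde{f}$ in~\eqref{eq:tilde_f} of Algorithm~\ref{alg:classical algorithm of evaluation} satisfies~\eqref{eq:error of classical evaluation for function}. For each fixed $d$, I would split the event $\tilde{h}_{s_d}\neq c_{s_d}(x)$ into (i) $h_{s_d}(x)\neq c_{s_d}(x)$, which has probability at most $\epsilon_{\mathrm{learn}}$ over $x\sim\mathcal{D}_N$, and (ii) $\tilde{h}_{s_d}\neq h_{s_d}(x)$, which by the assumed evaluatability with parameters~\eqref{eq:error of evaluation in classical}--\eqref{eq:random of evaluation in classical} fails to hold with probability at most $\delta_{\mathrm{eval}}$ over $\mathcal{A}$'s randomness on all but a $\epsilon_{\mathrm{eval}}$-fraction of inputs. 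Taking a union bound over $d=1,\ldots,D$ on both counts and using the choices~\eqref{eq:error of learning in classical}--\eqref{eq:random of evaluation in classical}, the set of $x\sim\mathcal{D}_N$ on which any component $h_{s_d}$ disagrees with $c_{s_d}$ has measure at most $D\cdot\epsilon_{\mathrm{learn}}+D\cdot\epsilon_{\mathrm{eval}}\leq\mu$, and on the complement the probability that the algorithm correctly reproduces all $D$ bits simultaneously is at least $1-D\cdot\delta_{\mathrm{eval}}=1-\nu$, giving exactly~\eqref{eq:error of classical evaluation for function}.

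For the runtime, the algorithm performs $D$ invocations of the classical evaluation algorithm, each with runtime bounded by $O((N/(\epsilon_{\mathrm{eval}}\delta_{\mathrm{eval}}))^{\gamma})$ from~\eqref{eq:t_A_classical}, giving a total runtime of $O(D\cdot(N D/(\mu\nu))^{\gamma})=O(\poly(N,1/\mu,1/\nu))$ with the parameter choices above, so~\eqref{eq:runtime of classical evaluation for function} holds. Combining the correctness and runtime bounds shows $\{(f_N,\mathcal{D}_N)\}_N\in\HeurC$, contradicting Definition~\ref{def:quantum_advantage}. The main obstacle I anticipate is the careful bookkeeping of the two distinct failure modes (learning error, which is an event on the draw of $x$, versus evaluation error, which mixes a "bad input" event with algorithmic randomness) and ensuring that the parameter rescaling by factors of $D$ leaves both the advice length and the runtime polynomial; the subtlety is that the heuristic formulation in~\eqref{eq:heurfbqp}--\eqref{eq:heurfbpp_poly} requires the correctness bound to hold with an inner probability $\geq 1-\nu$ over the algorithm on a $(1-\mu)$-fraction of inputs, so one cannot simply average the two error probabilities but must carefully disentangle the "over $x$" and "over the coins" events before applying the union bounds.
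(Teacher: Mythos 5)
Your proposal is correct and follows essentially the same route as the paper's own proof: the same reduction via Algorithm~\ref{alg:classical algorithm of evaluation} with the advice string $\alpha=(\sigma_{h_{s_1}},\ldots,\sigma_{h_{s_D}})$, the same parameter choices~\eqref{eq:error of learning in classical}--\eqref{eq:random of evaluation in classical}, and the same pair of union bounds separating the learning-error event over $x\sim\mathcal{D}_N$ from the evaluation-error event over the algorithm's randomness. The subtlety you flag at the end---keeping the ``over $x$'' and ``over the coins'' probabilities disentangled before applying the union bounds---is exactly how the paper's argument is organized, so no gap remains.
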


\begin{proof}
    We prove the statement by contradiction; i.e., we show that, under the assumption that $\mathcal{C}_N$ is classically efficiently learnable by some classically efficiently evaluatable hypothesis class, there should exist a classical algorithm (Algorithm~\ref{alg:classical algorithm of evaluation}) with a polynomial-length advice string $\alpha$ achieving \eqref{eq:error of classical evaluation for function} and \eqref{eq:runtime of classical evaluation for function} for the reduction to computing the quantum advantageous function $f_N$.
    In the following, we first analyze the length of $\alpha$.
    Then, we consider the success probability of our algorithm for the reduction.
    Finally, we discuss the runtime of our algorithm for the reduction.

    The length of the advice string $\alpha$ is bounded as follows.
    As the advice string $\alpha$, we use the representations
    \begin{equation}
        \alpha\coloneqq\qty(\sigma_{h_{s_1}},\ldots,\sigma_{h_{s_D}})
    \end{equation}
    of $D$ hypotheses in~\eqref{eq:given hypotheses error} and~\eqref{eq:length_representation}.
    Due to~\eqref{eq:length_representation},~\eqref{eq:error of learning in classical},~\eqref{eq:random of learning in classical}, and $D=O(\poly(N))=N^\beta$ for some $\beta>0$,
    the total length of $\alpha$ is
    \begin{align}
        \sum_{d=1}^D\size(\sigma_{h_{s_d}})&=O\qty(D\times\qty(\frac{N}{\epsilon_\mathrm{learn}\delta_\mathrm{learn}})^\eta)\\
        &=O\qty(\frac{N^{\beta\eta+\beta+\eta}}{\mu^\eta}),
    \end{align}
    as required for $\HeurC$ in~\eqref{eq:heurfbpp_poly}.

    Regarding the success probability of Algorithm~\ref{alg:classical algorithm of evaluation}, the probabilistic parts of the algorithm are the input $x$ from $\mathcal{D}_N$ inducing the error between the hypotheses $h_{s_1}(x),\dots,h_{s_D}(x)$ and the true concepts $c_{s_1}(x),\dots,c_{s_D}(x)$ in \eqref{eq:given hypotheses error}, and the computations of the estimates $\tilde{h}_{s_1},\ldots,\tilde{h}_{s_D}$ of the hypotheses $h_{s_1}(x),\dots,h_{s_D}(x)$ by the evaluation algorithm $\mathcal{A}$ in~\eqref{eq:classical evaluation for hypothesis error}.
    The other parts, such as the output of $\tilde{f}$ from $\tilde{h}_{s_1},\ldots,\tilde{h}_{s_D}$ in~\eqref{eq:tilde_f}, are deterministic, as shown in Algorithm~\ref{alg:classical algorithm of evaluation}.
    In Algorithm~\ref{alg:classical algorithm of evaluation}, we require that the hypotheses $h_{s_1}(x),\dots,h_{s_D}(x)$ simultaneously coincides with the true concepts $c_{s_1}(x),\dots,c_{s_D}(x)$, i.e.,
    \begin{equation}
    \label{eq:classical hypothesis evaluation simultaneously}
        h_{s_1}(x) = c_{s_1}(x),\dots,h_{s_D}(x) = c_{s_D}(x).
    \end{equation}
    With our choice of $\epsilon_\mathrm{learn}$ in~\eqref{eq:error of learning in classical}, due to~\eqref{eq:given hypotheses error} and the union bound, this requirement is fulfilled for a large fraction of $x$ at least
    \begin{equation}
    \label{eq:error_learn}
        1-D\epsilon_\mathrm{learn}=1-\frac{\mu}{2}.
    \end{equation}
    In addition, we require that the estimates $\tilde{h}_{s_1},\ldots,\tilde{h}_{s_D}$ simultaneously coincides with these hypotheses $h_{s_1}(x),\dots,h_{s_D}(x)$, i.e.
    \begin{equation}
    \label{eq:tilde_h_h}
        \tilde{h}_{s_1}=h_{s_1}(x),\dots,\tilde{h}_{s_D}=h_{s_D}(x).
    \end{equation}
    With our choice of $\epsilon_\mathrm{eval}$ in~\eqref{eq:error of evaluation in classical} and $\delta_\mathrm{eval}$ in~\eqref{eq:random of evaluation in classical}, due to~\eqref{eq:classical evaluation for hypothesis error} and the union bound, this requirement is fulfilled for a large fraction of $x$ at least
    \begin{equation}
    \label{eq:error_eval}
        1-D\epsilon_\mathrm{eval}=1-\frac{\mu}{2},
    \end{equation}
    with a high probability of at least
    \begin{equation}
    \label{eq:probability}
        1-D\delta_\mathrm{eval}=1-\nu.
    \end{equation}
    Given the requirements in~\eqref{eq:classical hypothesis evaluation simultaneously} and~\eqref{eq:tilde_h_h},
    due to~\eqref{eq:tilde_f},
    the output of Algorithm~\ref{alg:classical algorithm of evaluation} is
    \begin{equation}
        \tilde{f}=\begin{pmatrix}
    \tilde{h}_{s_1}\\
    \tilde{h}_{s_2}\\
    \vdots\\
    \tilde{h}_{s_D}
    \end{pmatrix}=\begin{pmatrix}
    c_{s_1}(x)\\
    c_{s_2}(x)\\
    \vdots\\
    c_{s_D}(x)
    \end{pmatrix}=f_N(x),
    \end{equation}
    where the last equality follows from~\eqref{eq:concept_class} since $\{s_d\}_{d}$ is the standard basis of the $D$-dimensioanl vector space $\mathbb{F}_2^D$. 
    Consequently, due to~\eqref{eq:error_learn},~\eqref{eq:error_eval},~\eqref{eq:probability}, and the union bound, the requirements in~\eqref{eq:classical hypothesis evaluation simultaneously} and~\eqref{eq:tilde_h_h} are simultaneously fulfilled 
    for a large fraction of $x$ at least
    \begin{equation}
        1-\mu,
    \end{equation}
    with a high probability of at least
    \begin{equation}
        1-\nu,
    \end{equation}
    which yields the success probability of our algorithm as required for $\HeurC$ in~\eqref{eq:heurfbpp_poly}.

    The runtime of Algorithm~\ref{alg:classical algorithm of evaluation} is determined by the evaluations of the $D$ hypotheses and the bitwise inner product.
    For any $x$ and every $d\in{1,\dots,D}$, the runtime of the classical algorithm $\mathcal{A}$ for computing $h_{s_d}$ is given by
    \begin{equation}
        t_{\mathcal{A}}\qty(x,\sigma_{h_{s_d}},\epsilon_\mathrm{eval},\delta_\mathrm{eval})=O\qty(\qty(\frac{N}{\epsilon_\mathrm{eval}\delta_\mathrm{eval}})^\gamma),
    \end{equation}
    as shown in \eqref{eq:t_A_classical}.
    Thus, the runtime of the $D$ evaluations is
    \begin{align}
    \label{eq:classical runtime_1}
    \sum_{d=1}^{D}t_{\mathcal{A}}\qty(x,\sigma_{h_{s_d}},\epsilon_\mathrm{eval},\delta_\mathrm{eval})=O\qty(D\qty(\frac{N}{\epsilon_\mathrm{eval}\delta_\mathrm{eval}})^\gamma).
    \end{align}
    In addition, the runtime of the output of the $D$-dimensional vector $\tilde{f}$ in~\eqref{eq:tilde_f} is
    \begin{equation}
    \label{eq:classical runtime_2}
        O(D).
    \end{equation}
    Due to~\eqref{eq:classical runtime_1} and~\eqref{eq:classical runtime_2}, for $\epsilon_\mathrm{eval}$ in~\eqref{eq:error of evaluation in classical}, $\delta_\mathrm{eval}$ in~\eqref{eq:random of evaluation in classical}, and $D=O(N^\beta)$ with some constant $\beta>0$, the overall runtime of Algorithm~\ref{alg:classical algorithm of evaluation} is upper bounded by
    \begin{equation}
    \label{eq:runtime of algorithm 3}
        \begin{aligned}
            &O\left(D\qty(\frac{N}{\epsilon_\mathrm{eval}\delta_\mathrm{eval}})^\gamma\right) + O(D)\\
            &O\left(N^\beta\qty(\frac{N^\gamma N^{\beta\gamma}N^{\beta\gamma}}{\mu^\gamma\nu^\gamma})\right)\\
            &=O\left(\frac{N^{2\beta\gamma+\beta+\gamma}}{\mu^\gamma\nu^\gamma}\right)\\
            &=O\left(\poly\left(N,\frac{1}{\mu},\frac{1}{\nu}\right)\right),
        \end{aligned}
    \end{equation}
    as required for $\HeurC$ in~\eqref{eq:heurfbpp_poly}.
    
    Consequently, under the assumption that $\mathcal{C}_N$ is classically efficiently learnable by some classically efficiently evaluatable hypothesis class, one would be able to construct Algorithm~\ref{alg:classical algorithm of evaluation} achieving~\eqref{eq:error of classical evaluation for function} and~\eqref{eq:runtime of classical evaluation for function}; that is, the problem $\{(f_N,\mathcal{D}_N)\}$ would be in $\HeurC$.
    This contradicts Definition~\ref{def:quantum_advantage} of the quantum advantageous function $f_N$.
\end{proof}

\section{\label{sec: Learning advantage without conditions on the hypothetical class}Data-preparation protocols for demonstrating advantage of QML from general computational advantages}

In this appendix, we propose protocols for preparing the sample data for our learning tasks studied in Appendix~\ref{sec:advantage_QML} so as to demonstrate the advantage of QML using our learning tasks.
A nontrivial part of our analysis of this data preparation is that the sample data can be prepared only probabilistically in our general setting; after all, the quantumly advantageous functions used for our concept class in Definition~\ref{def: concept class based on quantum advantage} are defined for probabilistic algorithms and heuristic complexity classes in general.
Nevertheless, we provide feasible conditions for the correct data preparation.
The rest of this appendix is organized as follows.
In Appendix~\ref{sec:framework}, we provide a two-party setup for demonstrating the advantage of QML with one party preparing the data and the other learning from the data.
In Appendix~\ref{sec:quantum_data}, we describe a protocol using quantum computation to prepare the correct sample data with a high success probability for the demonstration.
In Appendix~\ref{sec:classical_data}, we describe another protocol using classical computation to prepare the sample data with a high success probability for the demonstration, in special cases where the quantumly advantageous function is constructed based on a class of one-way permutation that is hard to invert by a polynomial-time classical algorithm but can be inverted by a polynomial-time quantum algorithm.

\subsection{\label{sec:framework}Setup for demonstrating advantage of QML from general computational advantages}
This appendix provides a setup for demonstrating the advantage of QML\@.
In other words, we propose a learning setting including data preparation.

In our setup, we consider two parties; a party $A$ is in charge of data preparation, and the other party $B$ receives sample data from $A$ to perform learning.
The party $A$ uses either quantum or classical computers to prepare the data
while $B$ does not know how $A$ has prepared the data.
The data should be prepared in such a way that $B$ can achieve the learning if $B$ uses the quantum learning algorithm in Appendix~\ref{sec:quantum_algorithm} but cannot if $B$ is limited to any polynomial-time classical learning method as in Appendix~\ref{sec:classical hardness}.
See also the main text for an illustration of the setup.

The overall protocol for $A$ and $B$ demonstrating the advantage of QML in this setup is as follows.
First, two parties $A$ and $B$ are given the problem size $N$, the concept class $\mathcal{C}_N = \{c_s\}_{s\in\mathbb{F}_2^D}$ specified by the quantumly advantageous function $f_N:\mathbb{F}_2^N\to\mathbb{F}_2^D$ under a target distribution $\mathcal{D}_N$ in Definition~\ref{def: concept class based on quantum advantage}, the error parameter $\epsilon$ and the confidence parameter $\delta$, where $D = O(\poly(N))$.
Note that the description of $\mathcal{D}_N$ may be unknown to $A$ and $B$ throughout the protocol, but $A$ has access to (an oracle to load) an $O(\poly(N,1/\epsilon,1/\delta))$ amount of the inputs $x$ sampled from $\mathcal{D}_N$ within a unit time per loading each input.
Given these parameters, based on~\eqref{eq:N_quantum_algorithm}, $B$ determines the number of samples for learning as
\begin{equation}
\label{eq:M}
    M = \left\lceil\frac{D}{\epsilon}-1\right\rceil,
\end{equation}
where $\lceil{}\cdots{}\rceil$ is the ceiling function,
and send $M$ to $A$.
Then, $A$ decides the parameter $s$ of the target concept $c_s$ arbitrarily and keeps $s$ as $A$'s secret.
For $M$ and $s$, the task of $A$ is to correctly prepare the $M$ sample data
\begin{equation}
\label{eq:data}
    \{(x_m,c_s(x_m))\}_{m=1}^M,
\end{equation}
using a quantum or classical computer.
After preparing the $M$ data in~\eqref{eq:data}, $A$ sends the data to $B$.
Using the given sample data, the task of $B$ is to find a parameter $\tilde{s}\in\mathbb{F}_2^D$ and make a prediction for new input $x$ drawn from $\mathcal{D}_N$ by the hypothesis $h_{\tilde{s}}(x) = f(x) \cdot \tilde{s}$ so that the error should satisfy
\begin{align}
    \pr_{x\sim\mathcal{D}_N}[h_{\tilde{s}}(x) \neq c_s(x)] \leq \epsilon
\end{align}
with a high probability of at least $1-\delta$.

Using Algorithm~\ref{alg:algorithm of learning} and Algorithm~\ref{alg:algorithm of evaluation}, $B$ can achieve this task with quantum computation within a polynomial time
\begin{equation}
    O(\poly(N,1/\epsilon,1/\delta),
\end{equation}
and our analysis in Appendix~\ref{sec:classical hardness} shows that $B$ cannot achieve this task with any polynomial-time classical method.
In the following appendices, we will construct $A$'s algorithms for preparing the data in~\eqref{eq:data} with a high probability of at least $1-\delta$ within a polynomial time
\begin{equation}
    O(\poly(N,1/\epsilon,1/\delta)).
\end{equation}
With a sufficient amount of correct data, one can conduct the learning and test the learned hypothesis.
Thus, with $A$'s data-preparation algorithm and $B$'s learning and evaluation algorithms, our protocol in the above setup can demonstrate the advantage of QML\@.

\subsection{\label{sec:quantum_data}Quantum algorithm for preparing data}
In this appendix, we show how to prepare the sample data in~\eqref{eq:data} for the concept class in Definition~\ref{def: concept class based on quantum advantage} based on a quantumly advantageous function $f_N$ in Definition~\ref{def:quantum_advantage}, using a quantum algorithm.

\begin{figure}[!t]
    \centering
    \begin{algorithm}[H]
        \caption{Quantum algorithm for data preparation}
        \label{alg:data_prep_quantum}
        \begin{algorithmic}[1]
            \REQUIRE The problem size $N$, the concept class $\mathcal{C}_N$ in Definition~\ref{def: concept class based on quantum advantage} with a quantumly advantageous function $f_N$, $\epsilon>0$, $\delta>0$, the number $M=O(\poly(N,1/\epsilon,1/\delta))$ of samples to be prepared (e.g., given by~\eqref{eq:M}), the true parameter $s$ of the target concept, inputs sampled from a target distribution $\mathcal{D}_N$ to be loaded from an oracle.
            \ENSURE An estimate $\{(x_m,\tilde{c}_m)\}_{m=1}^{M}$ of the $M$ sample data $\{(x_m,c_s(x_m))\}_{m=1}^{M}$ satisfying~\eqref{eq:goal_data_prep} with a high probability at least $1-\delta$.
            \FOR{$m=1,\dots,M$}
            \STATE Load an input $x_m$ sampled from the target distribution $\mathcal{D}_N$ (with access to the oracle).
            \STATE Perform the quantum algorithm $\mathcal{A}$ in~\eqref{eq:quantum_algorithm_function} for computing $f_N$ for the input $x_m$ with the parameters $\mu$ and $\nu$ in~\eqref{eq:mu_data_prep} and~\eqref{eq:nu_data_prep}, respectively, to obtain $\mathcal{A}(x_m)$.
            \STATE Compute $\tilde{c}_m=\mathcal{A}(x_m)\cdot s$ in~\eqref{eq:output_data}.
            \ENDFOR
            \RETURN $\{(x_m,\tilde{c}_m)\}_{m=1}^{M}$.
        \end{algorithmic}
    \end{algorithm}
\end{figure}

The data-preparation algorithm is shown in Algorithm~\ref{alg:data_prep_quantum}.
As described in Appendix~\ref{sec:framework},
given the problem size $N$, the concept class $\mathcal{C}_N$ in Definition~\ref{def: concept class based on quantum advantage} with a quantumly advantageous function $f_N$, the error parameter $\epsilon$, the significance parameter $\delta$, the number $M=O(\poly(N,1/\epsilon,1/\delta))$ of samples to be prepared (e.g., given by~\eqref{eq:M}, yet we here describe the algorithm for general $M$), and the true parameter $s$ of the target concept, we assume that Algorithm~\ref{alg:data_prep_quantum} has access to (an oracle to load) an $O(\poly(N,1/\epsilon,1/\delta))$ amount of the inputs $x$ sampled from $\mathcal{D}_N$ within a unit time per loading each input.
Then, the goal of Algorithm~\ref{alg:data_prep_quantum} is to output an estimate $\{(x_m,\tilde{c}_m)\}_{m=1}^{M}$ of the data $\{(x_m,c_s(x_m))\}_{m=1}^{M}$ in~\eqref{eq:data} with $x_m$ drawn from the target distribution $\mathcal{D}_N$, so that it should hold with a high probability at least $1-\delta$ that, for all $m$,
\begin{equation}
\label{eq:goal_data_prep}
    \tilde{c}_m=c_s(x_m).
\end{equation}
Note that the error parameter $\epsilon$ is not explicitly relevant to Algorithm~\ref{alg:data_prep_quantum} except for the possibility of $M$ depending on $\epsilon$.
To this goal, we set the internal parameters in Algorithm~\ref{alg:data_prep_quantum} as
\begin{align}
\label{eq:mu_data_prep}
    \mu&=\frac{\delta}{2M},\\
\label{eq:nu_data_prep}
    \nu&=\frac{\delta}{2M}.
\end{align}
In Algorithm~\ref{alg:data_prep_quantum}, for each $m=1,\ldots,M$, we start with sampling $x_m$ from the target distribution $\mathcal{D}_N$.
Then, we use the quantum algorithm $\mathcal{A}$ in~\eqref{eq:quantum_algorithm_function} to compute the quantumly advantageous function $f_N$ with $\mu$ in~\eqref{eq:mu_data_prep} and $\nu$ in~\eqref{eq:nu_data_prep}, to obtain $\mathcal{A}(x_m)$ within a polynomial runtime in~\eqref{eq:runtime_t_A}, where $\mu$ and $\nu$ in~\eqref{eq:quantum_algorithm_function} may be omitted for simplicity of the presentation if obvious from the context.
Finally, Algorithm~\ref{alg:data_prep_quantum} computes an estimate $\tilde{c}_m$ of $c_s(x_m)$ by
\begin{equation}
\label{eq:output_data}
    \tilde{c}_m\coloneqq\mathcal{A}(x_m)\cdot s,
\end{equation}
in accordance with the definition of $c_s$ in~\eqref{eq:concept_class}.
After performing these computations for all $m$, the algorithm outputs
\begin{equation}
    \{(x_m,\tilde{c}_m)\}_{m=1}^M
\end{equation}
as an estimate of the data $\{(x_m,c_s(x_m))\}_{m=1}^{M}$ in~\eqref{eq:data}.

The following theorem shows that this algorithm prepares the data  in~\eqref{eq:data} correctly with a high probability $1-\delta$ within a polynomial time.

\begin{theorem}[The polynomial-time data preparation with a quantum algorithm]
Given any $N$, $\epsilon$, and $\delta$, for any $M=O(\poly(N,1/\epsilon,1/\delta))$,
Algorithm~\ref{alg:data_prep_quantum} outputs the $M$ sample data $\{(x_m,c_s(x_m))\}_{m=1}^{M}$ in~\eqref{eq:data} with a high probability at least $1-\delta$ within a polynomial time
\begin{equation}
    O\qty(\poly(N,1/\epsilon,1/\delta)).
\end{equation}
\end{theorem}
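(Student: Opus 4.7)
The plan is to bound the success probability by a straightforward two-step union bound, combined with the $\HeurQ$ guarantee for $\mathcal{A}$. In any single iteration $m$, Algorithm~\ref{alg:data_prep_quantum} produces the correct label as soon as $\mathcal{A}(x_m)=f_N(x_m)$, because then~\eqref{eq:output_data} gives $\tilde{c}_m=\mathcal{A}(x_m)\cdot s=f_N(x_m)\cdot s=c_s(x_m)$ by~\eqref{eq:concept_class}. So the correctness question reduces entirely to controlling $\pr[\mathcal{A}(x_m)\neq f_N(x_m)]$ and then aggregating over $m$.

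For each fixed $m$, I would decompose the per-iteration failure event using the two-level heuristic guarantee~\eqref{eq:quantum_algorithm_function}: either $x_m$ falls into the at-most-$\mu$-fraction of ``hard'' inputs on which $\mathcal{A}$ need not achieve success probability $1-\nu$, or $x_m$ is a ``good'' input but the quantum algorithm's internal randomness produces a wrong output. This yields $\pr[\mathcal{A}(x_m)\neq f_N(x_m)]\leq \mu+(1-\mu)\nu\leq \mu+\nu$. Plugging in $\mu=\nu=\delta/(2M)$ from~\eqref{eq:mu_data_prep}--\eqref{eq:nu_data_prep} gives $\delta/M$ per iteration, and a union bound over the $M$ independent trials then caps the overall failure probability at $\delta$, so $\tilde{c}_m=c_s(x_m)$ for every $m$ with probability at least $1-\delta$ as required.

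The runtime side is essentially bookkeeping. Each iteration consists of a unit-time load of $x_m$ from the oracle, a single call to $\mathcal{A}$ with runtime $O((N/(\mu\nu))^\alpha)=O((NM^2/\delta^2)^\alpha)$ by~\eqref{eq:runtime_t_A}, and an $O(D)$-time bitwise inner product in~\eqref{eq:output_data}. Summing over $m=1,\ldots,M$ and substituting $M=O(\poly(N,1/\epsilon,1/\delta))$ together with $D=O(\poly(N))$ produces a total runtime polynomial in $N$, $1/\epsilon$, and $1/\delta$, completing the argument.

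There is no substantive obstacle here; the only point needing care is the decomposition of the heuristic failure probability into its ``bad-input'' and ``bad-randomness'' components, since~\eqref{eq:quantum_algorithm_function} is formulated as a nested probability rather than a joint one. Once that is spelled out, the choice of $\mu,\nu$ in~\eqref{eq:mu_data_prep}--\eqref{eq:nu_data_prep} is tuned precisely so that the two union bounds (over failure types within a trial, and over trials) together contribute at most $\delta$, while still leaving $\mu\nu$ polynomially small so the per-call runtime of $\mathcal{A}$ remains polynomial.
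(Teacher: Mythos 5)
Your proposal is correct and follows essentially the same route as the paper's proof: reduce correctness of each label to the event $\mathcal{A}(x_m)=f_N(x_m)$, bound the per-sample failure probability by $\mu+\nu$ via the two-level guarantee in~\eqref{eq:quantum_algorithm_function}, apply a union bound over the $M$ samples with $\mu=\nu=\delta/(2M)$, and conclude with the same runtime bookkeeping. Your explicit decomposition of the nested probability into the ``bad-input'' and ``bad-randomness'' components is a slightly more careful spelling-out of the union bound the paper invokes, but it is the same argument.
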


\begin{proof}
    We will first discuss the success probability of Algorithm~\ref{alg:data_prep_quantum} and then provide an upper bound of the runtime.

    The probabilistic parts of Algorithm~\ref{alg:data_prep_quantum} are the computations of $f_N$ by the quantum algorithm $\mathcal{A}$.
    We require that for all $m\in\{1,\ldots,M\}$, the output $\mathcal{A}(x_m)$ of this quantum algorithm should coincide with $f_N(x_m)$ simultaneously, i.e.
    \begin{equation}
        \mathcal{A}(x_1)=f_N(x_1),\ldots,\mathcal{A}(x_M)=f_N(x_M).
    \end{equation}
    Given this requirement, the output of Algorithm~\ref{alg:data_prep_quantum} coincides with the data in~\eqref{eq:data}; that is, $\tilde{c}_m$ in~\eqref{eq:output_data} satisfies
    \begin{equation}
        \tilde{c}_m=\mathcal{A}(x_m)\cdot s=f_N(x_m)\cdot s=c_s(x_m),
    \end{equation}
    by definition of $c_s$ in~\eqref{eq:concept_class}.
    With our choice of $\mu$ in~\eqref{eq:mu_data_prep} and $\nu$ in~\eqref{eq:nu_data_prep}, due to the union bound, this requirement is fulfilled with a high probability at least
    \begin{equation}
        1-M(\mu+\nu)=1-\delta,
    \end{equation}
    which yields the desired success probability.
    
    The runtime of Algorithm~\ref{alg:data_prep_quantum} is determined by the computations of $f_N$ and the bitwise inner product.
    Due to \eqref{eq:runtime_t_A} with the choice of $\mu$ in~\eqref{eq:mu_data_prep} and $\nu$ in~\eqref{eq:nu_data_prep}, for $D = O(N^\beta)$ with $\beta>0$ and $M=O((\frac{N}{\epsilon\delta})^\xi)$ with $\xi>0$, we have the overall runtime
    \begin{align}
            &O\left(M\left(\left(\frac{N}{\mu\nu}\right)^\alpha+D\right)\right)\\
            &= 
            O\left(\frac{N^{\alpha+2\alpha\xi+\xi}}{\epsilon^{(2\alpha+1)\xi}\delta^{2\alpha\xi+2\alpha+\xi}}+\frac{N^{\beta+\xi}}{\epsilon^\xi\delta^\xi}\right)\\
            &= O\left(\poly\left(N,1/\epsilon,1/\delta\right)\right),
    \end{align}
    which yields the conclusion.
\end{proof}

\subsection{\label{sec:classical_data}Classical algorithm for preparing data based on classically one-way permutation}
In this appendix, we show how to prepare the sample data in \eqref{eq:data} using a classical algorithm, for a concept class derived by replacing the quantumly advantageous function used in Definition~\ref{def: concept class based on quantum advantage} with an inverse of a classically one-way permutation introduced in the following.

We define the classically one-way permutation to derive the concept class for preparing the data with the classical algorithm.

\begin{definition}[\label{def:one-way permutation}Classically one-way permutation]
    For $N$, let $f^{\mathrm{OWP}}_N:\{0,1\}^N\to\{0,1\}^N$ be a permutation (i.e., an $N$-bit one-to-one function), where we may write $\mathbb{F}_2^N=\{0,1\}^N$. 
    We write
    \begin{equation}
    \label{eq:y}
        x=f^{\mathrm{OWP}}_N(y),
    \end{equation}
    where sampling $x$ from a probability distribution $\mathcal{D}_N$ with computing $y=f^{\mathrm{OWP}^{-1}}_N(x)$ is equivalent to sampling $y$ from a probability distribution $\mathcal{D}_N^{Y}$ with computing~\eqref{eq:y} under the condition that
    \begin{equation}
    \label{eq:D_out}
        \mathcal{D}_N=f^{\mathrm{OWP}}_N(\mathcal{D}_N^{Y}).
    \end{equation}
    We say that a permutation $f^{\mathrm{OWP}}_N$ is a classically one-way permutation $f^{\mathrm{OWP}}_N$ under $\mathcal{D}_N$ if the relation $R \coloneqq \{R_N\}_{N\in\mathbb{N}}$ with $R_N \coloneqq \{\left(y,f^{\mathrm{OWP}}_N(y)\right)\}_{y\in\{0,1\}^N}$ is in $\textsf{FP}$, and the distributional function problem $\{(f_N^{\mathrm{OWP}^{-1}},\mathcal{D}_N)\}_{N\in\mathbb{N}}$ is in $\HeurQ$ but not in $\HeurC$.
\end{definition}

We then introduce the following concept class by replacing the quantumly advantageous function in the concept class of Definition~\ref{def: concept class based on quantum advantage} with the classically one-way permutation in Definition~\ref{def:one-way permutation}.

\begin{definition}[\label{def: concept class based on oneway perm}Concept class based on classically one-way permutation]
    For any $N$, any probability distribution $\mathcal{D}_N$ over $\mathbb{F}_2^N$, and any classically one-way permutation $f_N^{\mathrm{OWP}}:\mathbb{F}_2^N\to\{0,1\}^N$ under $\mathcal{D}_N$ in Definition~\ref{def:one-way permutation}, we define a concept class $\mathcal{C}_N^{\mathrm{OWP}}$ over the input space $\mathcal{X}_N$ as $\mathcal{C}_N^{\mathrm{OWP}}\coloneqq\{c_s\}_{s\in{\mathbb{F}_2^N}}$, where $\mathcal{X}_N$ is the support of $\mathcal{D}_N$ in~\eqref{eq:D_out}, and $c_s$ is a concept given by
    \begin{equation}
    \label{eq: concept based on one-way}
    c_s(x)\coloneqq f_N^{\mathrm{OWP}^{-1}}(x)\cdot s\in\mathbb{F}_2=\{0,1\},
\end{equation}
with $f_N^{\mathrm{OWP}^{-1}}(x)\cdot s$ denoting a bitwise inner product in the vector space $\mathbb{F}_2^N$ over the finite field. 
\end{definition}

By definition, the inverse $f_N^{{\mathrm{OWP}^{-1}}}$ of the classically one-way permutation $f_N^{\mathrm{OWP}}$ in Definition~\ref{def: concept class based on oneway perm} is a special case of the quantumly advantageous functions in Definition~\ref{def: concept class based on quantum advantage}. 
Therefore, the quantum efficient learnability, the quantum efficient evaluatability, and the classical hardness for this concept class follow from the same argument as Appendix~\ref{sec:advantage_QML}.
Note that particular variants of classically one-way permutations $f_N^{\mathrm{OWP}}$ that can be inverted by Shor's algorithms are used in the previous work on the advantage of QML~\cite{servedio2004equivalences, liu2021rigorous}.
Since $f_N^{\mathrm{OWP}}$ is a permutation, if the target distribution $\mathcal{D}_N$ is uniform, then Algorithm~\ref{alg:data_prep_classical} simply samples from the uniform distribution, which is assumed in Refs.~\cite{servedio2004equivalences, liu2021rigorous}.
By contrast, our analysis does not assume the uniform distribution, generalizing the settings in Refs.~\cite{servedio2004equivalences, liu2021rigorous}.
And even more importantly, the concept class in Definition~\ref{def: concept class based on oneway perm} does not depend on specific cryptographic techniques for the classically one-way permutation $f_N^{\mathrm{OWP}}$ such as those invertible by Shor's algorithms, in the same way as the concept class in Definition~\ref{def: concept class based on quantum advantage} without depending on the specific mathematical structure of quantumly advantageous functions.

\begin{figure}[!t]
    \centering
    \begin{algorithm}[H]
        \caption{Classical algorithm for data preparation with classically one-way function}
        \label{alg:data_prep_classical}
        \begin{algorithmic}[1]
            \REQUIRE The problem size $N$, the concept class $\mathcal{C}_N$ in Definition~\ref{def: concept class based on oneway perm} with a classically one-way permutation $f_N^\mathrm{OWP}$, $\epsilon>0$, $\delta>0$, the number $M=O(\poly(N,1/\epsilon,1/\delta))$ of samples to be prepared (e.g., given by~\eqref{eq:M}), the true parameter $s$ of the target concept, and parameters $y$ sampled from a probability distribution $\mathcal{D}_N^{Y}$ in~\eqref{eq:D_out} to be loaded from an oracle.
            \ENSURE The $M$ sample data $\{(x_m,c_s(x_m))\}_{m=1}^{M}$ in \eqref{eq:data_classi}.
            \FOR{$m=1,\dots,M$}
            \STATE Load a parameter $y_m$ sampled from the distribution $\mathcal{D}_N^Y$ (with access to the oracle).
            \STATE Perform the deterministic classical algorithm in \eqref{eq:x_m} to compute $f_N^{\mathrm{OWP}}$ for $y_m$, to obtain $x_m =f_N^{\mathrm{OWP}}(y_m)$.
            \STATE Compute $c_s(x_m)=y_m\cdot s$ in~\eqref{eq:c_s_classical}.
            \ENDFOR
            \RETURN $\{(x_m,c_s(x_m))\}_{m=1}^{M}$.
        \end{algorithmic}
    \end{algorithm}
\end{figure}

In the following, based on the setup described in Appendix~\ref{sec:framework}, we modify the protocol in such a way that the concept class is replaced with the above concept class based on a classically one-way permutation, and the party $A$ has access to (an oracle to load) an $O(\poly(N,1/\epsilon,1/\delta))$ amount of the parameters $y$ in~\eqref{eq:y} sampled from $\mathcal{D}_N^Y$ in~\eqref{eq:D_out}, in place of loading $x$, within a unit time per loading each $y$.

The classical data-preparation algorithm is shown in Algorithm~\ref{alg:data_prep_classical}.
Given the problem size $N$, the concept class $\mathcal{C}_N$ in Definition~\ref{def: concept class based on oneway perm} with a classically one-way permutation $f^{\mathrm{OWP}}_N$ under the probability distribution $\mathcal{D}_N$ in Definition~\ref{def:one-way permutation}, the error parameter $\epsilon$, the significance parameter $\delta$, the number $M=O(\poly(N,1/\epsilon,1/\delta))$ of samples to be prepared (e.g., given by~\eqref{eq:M}, yet we here describe the algorithm for general $M$), the true parameter $s$ of the target concept, and the parameters $y$ to be loaded as assumed above, the goal of Algorithm~\ref{alg:data_prep_classical} is to output $M$ pairs of data
\begin{equation}
\label{eq:data_classi}
    \{(x_m,c_s(x_m))\}_{m=1}^{M},
\end{equation}
with each $x_m$ drawn from the distribution $\mathcal{D}_N$, where \eqref{eq:data_classi} is a variant of \eqref{eq:data} up to the change of the concept class to~\eqref{eq: concept based on one-way}.
Note that the error parameter $\epsilon$ and the significance parameter $\delta$ are not explicitly relevant to Algorithm~\ref{alg:data_prep_classical} except for the possibility of $M$ depending on $\epsilon$ and $\delta$.
In Algorithm~\ref{alg:data_prep_classical}, for each $m = 1,\dots, M$, we start with loading $y_m$ sampled from the distribution $\mathcal{D}_N^Y$.
Then, the algorithm computes the classically one-way permutation $f_N^\mathrm{OWP}$ for $y_m$.
By definition of the classically one-way permutation $\{f_N^{\mathrm{OWP}}\}_{N\in\mathbb{N}}\in \textsf{FP}$, we have a polynomial-time deterministic classical algorithm $\mathcal{A}$ to compute
\begin{equation}
\label{eq:x_m}
    x_m=\mathcal{A}(y_m)=f_N^{\mathrm{OWP}}(y_m)
\end{equation}
within runtime
\begin{equation}
\label{eq:classical_alg_owp}
    t_\mathcal{A}(y_m) = O\left(N^\zeta\right),
\end{equation}
where $\zeta>0$ is an upper bound of the degree of the polynomial runtime.
Finally, Algorithm~\ref{alg:data_prep_classical} computes $c_s(x_m)$ by
\begin{equation}
\label{eq:c_s_classical}
    c_s(x_m) = y_m\cdot s,
\end{equation}
following the definition of $c_s$ in~\eqref{eq: concept based on one-way}.
After performing these computations for all $m$, Algorithm~\ref{alg:data_prep_classical} outputs the data in \eqref{eq:data_classi}, i.e.,
\begin{equation}
    \{(x_m,c_s(x_m))\}_{m=1}^M.
\end{equation}

The following theorem shows that Algorithm~\ref{alg:data_prep_classical} prepares the data in \eqref{eq:data_classi} correctly within a polynomial time.
\begin{theorem}[The polynomial-time data preparation with a classical algorithm based on classically one-way functions]
Given any $N$, $\epsilon$, and $\delta$, for any $M = O(\poly(N,1/\epsilon,1/\delta))$, Algorithm~\ref{alg:data_prep_classical} outputs the $M$ sample data $\{(x_m,c_s(x_m))\}_{m=1}^M$ in \eqref{eq:data_classi} within a polynomial time 
    \begin{equation}
    O(\poly(N,1/\epsilon,1/\delta)).
    \end{equation}
\end{theorem}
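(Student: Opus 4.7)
The plan is to verify two properties of Algorithm~\ref{alg:data_prep_classical}: first, that it outputs exactly the sample data $\{(x_m,c_s(x_m))\}_{m=1}^M$ with $x_m\sim\mathcal{D}_N$ (correctness), and second, that the total runtime is bounded by $O(\poly(N,1/\epsilon,1/\delta))$. Unlike the quantum case treated in the previous theorem, the algorithm here is deterministic: the classically one-way permutation $f_N^{\mathrm{OWP}}$ lies in $\textsf{FP}$ by Definition~\ref{def:one-way permutation}, and the remaining arithmetic is all classical and exact. In particular, there is no error probability to handle via union bound, so the argument reduces to two clean checks rather than a more delicate concentration calculation.

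For correctness, I would first invoke the equivalence of sampling schemes baked into Definition~\ref{def:one-way permutation}: since $\mathcal{D}_N = f_N^{\mathrm{OWP}}(\mathcal{D}_N^Y)$ as in~\eqref{eq:D_out}, drawing $y_m\sim\mathcal{D}_N^Y$ and setting $x_m = f_N^{\mathrm{OWP}}(y_m)$ in~\eqref{eq:x_m} produces $x_m$ distributed exactly according to $\mathcal{D}_N$, as required for the sample data in~\eqref{eq:data_classi}. Next, because $f_N^{\mathrm{OWP}}$ is a permutation, $y_m = f_N^{\mathrm{OWP}^{-1}}(x_m)$ holds identically, and substituting into the concept~\eqref{eq: concept based on one-way} yields $c_s(x_m) = f_N^{\mathrm{OWP}^{-1}}(x_m)\cdot s = y_m\cdot s$, which is precisely the value computed in~\eqref{eq:c_s_classical}. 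Hence each loop iteration outputs a valid $(x_m, c_s(x_m))$ pair deterministically, and the full output matches~\eqref{eq:data_classi} with probability one.

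For the runtime, I would bound each of the $M$ iterations separately. Loading $y_m$ costs $O(1)$ by the oracle assumption in Appendix~\ref{sec:framework}, evaluating $f_N^{\mathrm{OWP}}$ costs $t_{\mathcal{A}}(y_m) = O(N^\zeta)$ by~\eqref{eq:classical_alg_owp}, and the bitwise inner product in $\mathbb{F}_2^N$ costs $O(N)$. Summing over all $m=1,\ldots,M$ with $M = O((N/(\epsilon\delta))^\xi)$ for some $\xi>0$ gives a total runtime of $O(M\cdot(N^\zeta + N)) = O(\poly(N,1/\epsilon,1/\delta))$, completing the proof. The only subtlety worth spelling out is the justification that $x_m$ is truly distributed as $\mathcal{D}_N$ rather than merely supported on $\mathcal{X}_N$; this is the sole place where the permutation (and not just injectivity or surjectivity) of $f_N^{\mathrm{OWP}}$ is essential, and it is guaranteed by~\eqref{eq:D_out}. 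Everything else is a routine composition of the $\textsf{FP}$ runtime bound with the polynomial choice of $M$, so no further technical obstacles arise.
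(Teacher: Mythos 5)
Your proposal is correct and follows essentially the same route as the paper's proof: determinism settles correctness, and the runtime is the sum over $M$ iterations of the $\textsf{FP}$ evaluation cost $O(N^\zeta)$ plus the $O(N)$ inner product. You merely spell out the correctness step (the distributional equivalence from~\eqref{eq:D_out} and the permutation identity $y_m=f_N^{\mathrm{OWP}^{-1}}(x_m)$) that the paper compresses into the single remark that the algorithm is deterministic and has no error.
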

\begin{proof}[Proof]
    Algorithm~\ref{alg:data_prep_classical} is a deterministic algorithm and has no error; thus, it suffices to discuss the runtime.
    The runtime of Algorithm~\ref{alg:data_prep_classical} is determined by computing the classically one-way permutation $f_N^{\mathrm{OWP}}$ in Definition~\ref{def:one-way permutation} for $M$ inputs $y_1,\dots y_M$ and bitwise inner product.
    Due to~\eqref{eq:classical_alg_owp}, for $M = O((\frac{N}{\epsilon\delta})^\xi)$ with $\xi > 0$, we have the overall runtime
    \begin{align}
        &O\left(M\left(N^\zeta + N\right)\right)\\
        &= O\left(\frac{N^{\zeta+\xi}}{\epsilon^\xi\delta^\xi}\right)\\
        &= O\left(\poly\left(\frac{N}{\epsilon\delta}\right)\right),
    \end{align}
    which yields the conclusion.
\end{proof}

\bibliography{main.bib}

\end{document}